\documentclass[11pt]{article}
\usepackage{xspace}  
\usepackage{textcomp}
\usepackage{amsmath,amsthm,amsfonts,amssymb}
\usepackage{hyperref}
\usepackage{booktabs}
\usepackage{fullpage}
\usepackage{mathtools}
\usepackage{url}
\usepackage{tcolorbox}
\usepackage{comment}
\usepackage{makecell}
\usepackage[]{algorithm2e, algorithmicx, algpseudocode} 
\usepackage{bbm}
\usepackage{wrapfig}
\usepackage{mathdots}
\RestyleAlgo{boxed}
\usepackage{thmtools} 
\usepackage{thm-restate}
\usepackage[capitalise]{cleveref}
\usepackage[normalem]{ulem} % use \sout
\usepackage{url}
\usepackage{MnSymbol,wasysym} % smiley

\usepackage{cite} % this package doesn't seem to work well when we use ACM style files.

\newcommand{\defn}[1]{\textbf{\emph{\boldmath #1}}}

\newcommand{\poly}{\text{poly}}

\newcommand{\thmref}[1]         {Theorem~\ref{thm:#1}}

\newcommand{\lemref}[1]         {Lemma~\ref{lem:#1}}

\newcommand{\corref}[1]         {Corollary~\ref{cor:#1}}

\renewcommand{\paragraph}[1]{\vspace{0.1in}\noindent{\bf \boldmath #1}} 

\renewcommand{\subparagraph}[1]{\vspace{0.1in}\noindent{\it  #1}} 

\newcommand{\dplus}{d^+}
\newcommand{\dmax}{d^{\max}}
\newcommand{\eccplus}{\ecc^+}
\newcommand{\eccmax}{\ecc^{\max}}

\newcommand{\ecc}{\texttt{ecc}}

\newcommand{\meetmaxdiam}{\ensuremath{ \text{meet}^{\max}\text{-diam}}}
\newcommand{\meetplusdiam}{\ensuremath{\text{meet}^{+}\text{-diam}}}
\newcommand{\meetmaxdist}{\ensuremath{ \text{meet}^{\max}\text{-distance}}}
\newcommand{\meetplusdist}{\ensuremath{\text{meet}^{+}\text{-distance}}}

\newcommand{\meetmaxradius}{\ensuremath{ \text{meet}^{\max}\text{-radius}}}
\newcommand{\meetplusradius}{\ensuremath{\text{meet}^{+}\text{-radius}}}

\newcommand{\kmeetmaxdiam}[1]{\ensuremath{#1 \text{-meet}^{\max}\text{-diam}}}
\newcommand{\kmeetplusdiam}[1]{\ensuremath{#1 \text{-meet}^{+}\text{-diam}}}

\newcommand{\outball}{\ensuremath{\overrightarrow{B}}}
\newcommand{\inball}{\ensuremath{\overleftarrow{B}}}

\newtheorem{theorem}{Theorem}
\newtheorem{lemma}{Lemma} 
\newtheorem{corollary}{Corollary} 
\newtheorem{claim}[lemma]{Claim} 

\newtheorem{definition}{Definition}

\newtheorem{hypothesis}{Hypothesis}

\usepackage{todonotes}

\newif\ifcomments
\commentstrue
\definecolor{yaleblue}{rgb}{0.06, 0.3, 0.57}
\definecolor{ao(english)}{rgb}{0.0, 0.5, 0.0}

\title{When Shall We $k$ Meet Again? Tight Algorithms for Diameter and Radius under the Meet Distance}
\author{Yael Kirkpatrick\thanks{\texttt{yaelkirk@mit.edu}}\\MIT \and John Kuszmaul\thanks{\texttt{jhnkszml@mit.edu}}\\MIT  \and  Merey Temirzinova\thanks{\texttt{merey@mit.edu}}\\MIT \and Virginia Vassilevska Williams\thanks{\texttt{virgi@mit.edu}}\\MIT}
\date{}

\begin{document}

\sloppy

\maketitle

\begin{abstract}
    Finding an optimal meeting point for a collection of agents on a directed graph is a classical problem studied in the context of network analysis, operations research and computational geometry. In this work, we use the two objectives of optimal meeting points examined in the literature to study two notions of meet-distance: $d^{\max}(u,v)$, the minimum over all meeting points $w$ of $\max(d(u,w), d(v,w))$; and $d^+(u,v)$, the minimum over all meeting points of $d(u,w) + d(v,w)$. These values measure the minimum time and minimum total distance required for two agents to meet.

We initiate the fine-grained study of fundamental graph parameters under the two notions of meet-distance, namely the diameter, radius and eccentricities. 

For general directed graphs, we give an $\tilde{O}(m\sqrt{n})$ time algorithm for computing a 2-approximation to both notions of meet-diameter and show that this result is optimal under SETH. In contrast, we show that such a result is unattainable for the  meet-radius as any finite approximation requires quadratic time under the Hitting Set Conjecture.

For directed acyclic graphs, we obtain stronger results. We compute the meet$^{\max}$-diameter exactly in linear time and give a linear-time $2$-approximation for the meet$^{+}$-diameter. We complement the latter with a quadratic-time lower bound for any $(3/2-\varepsilon)$-approximation under SETH, yielding a separation between the two meet-distance objectives.

Finally, we study the generalized meet-distance of $k$-tuples of vertices. For every positive integer $\ell$,
we reduce the problem of $\ell$-approximating the $k$-point meet-diameter to computing an exact meet-diameter on smaller tuples, obtaining an $\ell$-approximation in
time
\[
O\!\left(
mn+ \ell \left\lceil k^{1/\ell}\right\rceil
n^{\left\lceil k^{1/\ell}\right\rceil+1}
\right).
\]

\end{abstract}

\section{Introduction}
The problem of finding an optimal meeting point for two or more agents in a graph is one of the oldest questions in network analysis, raised by Hakimi in the 1960s \cite{hakimi1964}. The goal of this original work was to find an optimal location for the placement of a switching center in a communication network or for a hospital in a city.

These two goals motivate two slightly different problem definitions. 
The hospital placement aims to minimize the farthest a single person must travel to receive care. If we denote by $d(u,v)$ the distance from point $u$ to point $v$, our cost function in this case is
\[
\min_{v}\max_{u} d(u,v).
\]

On the other hand, the switching center placement aims to minimize the total length of wires used to connect all members of the network to the station. In this case, our cost function becomes
\[
\min_{v}\sum_{u} d(u,v).
\]

The optimal location for the hospital is known as the \emph{absolute center}, while the optimal location for the switching center is known as the \emph{absolute median}. In the context of Euclidean distances, the $(\min, +)$ switching station placement objective is known as the `Weber Problem' and has been studied extensively in the field of operations research and  computational geometry \cite{Weiszfeld1937, kuhn1962, cooper1968, chen1984, ostresh1977}. The $(\min, \max)$ hospital placement objective has also seen a long line of work \cite{Welzl1991,Megiddo1983}. 

More recently, optimal meeting-point problems (OMP) have been studied from the perspective of spatial databases aiming to answer queries of the form `find an optimal meeting point for a given set of agents' \cite{processingMeetPoints15, lee2021efficient, optimalMeetRoads2023, dynamicOMP2018, otaki2022planning}. The OMP literature also focuses on precisely the two objectives mentioned above: namely the min-max objective, which minimizes the farthest any single agent has to travel to the meeting point (hospital objective); and the min-sum objective, which minimizes the total distance traveled by all agents (switching center objective).

In this work we  revisit this classical optimization problem from a different perspective. Rather than asking for the optimal meeting location of some queried groups, we use optimal meeting as a notion of distance between vertices. Formally, given a directed, weighted graph $G = (V,E)$ we define two notions of meet-distance between a pair of points

\[
\dmax(u,v) \coloneqq \min_{w\in V} \max(d(u,w), d(v,w)),~~~~~~ \dplus(u,v) \coloneqq \min_{w\in V} d(u,w) + d(v,w).
\]

We call these \meetmaxdist{} and \meetplusdist. These are natural notions for directed graphs. While in undirected graphs, the $\dplus$ distance essentially collapses to the standard distance, in directed graphs two vertices may be unable to reach one another while still being able to reach a common third vertex. By considering $u$ and $v$ as potential meeting points themselves (using the standard definition $d(u,u) = 0$) we have that $d^{\max/+}(u,v) \leq \min (d(u,v), d(v,u))$ or the min-distance of $(u,v)$ \cite{fixedparam16}.

We note that these two distance notions are within a factor 2 of each other, as considering the optimal meeting point for one notion as a potential meeting point for the other gives us:
\[
\dmax(u,v) \leq \dplus(u,v) \leq 2 \dmax(u,v).
\]

Under these notions of distance, we explore the problem of computing fundamental graph parameters such as diameter, radius and eccentricities. The diameter, or the largest distance in the graph, is a well-studied problem in fine-grained complexity as it is a natural parameter measuring how fast information can spread in a network, or in our case, how quickly agents can meet. Computing the optimal center under the meet-distance can determine the optimal location to place a distribution center, allowing everyone to meet with someone coming from the distribution center to receive their service (care, medication, etc.).

The diameter, as well as all node-eccentricities (the largest distance from each node) and radius (the smallest node eccentricity) can be computed exactly by measuring all distances in a graph, but as this can take a long time \cite{fixedparam16}, a long line of work has gone into approximating these parameters quickly or proving hardness of such approximations. This includes undirected graphs \cite{sprasediamradius13, cgr, aingworthdiam99, undirdiamlb, towardsapprox21, betterapproxdiam14}, directed graphs \cite{bonnet21, lidiamapprox21, dirdiamlb21, fixedparam16}, as well as non-standard notions of distance such as the min-distance and max-distance \cite{fixedparam16, mindiam2022, mindiamdag21, mindistance2019, bergermindiam2023, improvedlinearmindiam2026}, roundtrip-distance \cite{roundtrip99}, and multimode-distance \cite{multimode, improvedlinearmindiam2026}.

In this work we study the problem of approximating these central graph parameters under both notions of meet-distance defined above. We consider the problem in general directed graphs as well as directed acyclic graphs (DAGs).

Finally, as in the literature, we consider the meet-distance of a group ($> 2$) of nodes. We generalize the above definitions to any $k$-tuple.

\begin{align*}
    \dmax(v_1, \ldots, v_k) &\coloneqq \min_{w\in V}\max_{1\leq i \leq k} d(v_i, w),\\
    \dplus(v_1, \ldots, v_k) &\coloneqq \min_{w\in V}\sum_{i=1}^k d(v_i, w).
\end{align*}

\subsection{Our Results}
For the 2-point case we are able to paint an almost complete picture, summarized in \cref{tab:2pointresults}. First we show that a single node eccentricity can be computed in near-linear time, as with standard graph distance, and thus all distances can be computed in time $\tilde{O}(mn)$. We then show that any finite approximation to the meet-radius, under either notion of meet-distance (and thus also any approximation to all-node eccentricities) requires quadratic time under the Hitting Set Conjecture, even when restricting the problem to DAGs.

For both \meetmaxdiam{} and \meetplusdiam{}, on the other hand, we are able to construct a 2-approximation algorithm running in $\tilde{O}(m\sqrt{n})$ time. We prove a matching lower bound, showing that any $(2-\varepsilon)$ approximation requires quadratic time under the Strong Exponential Time Hypothesis (SETH). Finally, in the case of DAGs, we show improved algorithms for both notions of meet-diameter. In fact, we are able to compute the  \meetmaxdiam{} exactly in linear time. We show that this is not possible for the \meetplusdiam{} as any $(3/2-\varepsilon)$ approximation requires quadratic time under SETH, giving a  separation between the two notions of meet-distance. We supplement this lower bound with a 2-approximation to the problem running in linear time, faster than its counterpart in general graphs. 

% Please add the following required packages to your document preamble:
% \usepackage{booktabs}
\begin{table}[]
\begin{tabular}{@{}llllll@{}}
\toprule
\textbf{Problem}  & \textbf{DAG}                 & \textbf{Approx} & \textbf{Runtime}                & \textbf{Lower Bound}   & \textbf{Reference}                                                                  \\ \midrule
$\meetmaxdiam$  & no          & 2      & $\tilde{O}(m\sqrt{n})$ & $(2-\varepsilon, O(1))$   & \cref{thm:2approx}, \cref{thm:2lb}       \\
$\meetplusdiam$ & no          & 2      & $\tilde{O}(m\sqrt{n})$ & $(2-\varepsilon, O(1))$   & \cref{thm:2approx}, \cref{thm:2lb}       \\
$\meetmaxdiam$ & yes   & exact  & $\tilde{O}(m)$         &               & \cref{clm:dagmaxdiag}                                     \\
$\meetplusdiam$ & yes   & 2      & $\tilde{O}(m)$         & $(3/2-\varepsilon, O(1))$ & \cref{clm:dagplusdiam}, \cref{thm:daglb} \\
$\meetmaxradius$ & yes          &       &  & any   &  \cref{thm:lbradius}  \\
$\meetplusradius$ & yes         &       &  & any   &  \cref{thm:lbradius}  \\
 \bottomrule
\end{tabular}
\caption{Result Summary for 2-point Meet-Distance. All lower bounds rule out $O(m^{2-\delta})$-time algorithms for any constant $\delta>0$.} \label{tab:2pointresults}
\end{table}

For our generalized $k$-point meet-distance we can no longer compute all distances in time $\tilde{O}(mn)$, as there are $n^k$ possible tuples to consider. For this problem, we are able to show a reduction from $\ell$-approximate $k$-point meet-diameter to a weighted notion of exact $k'$-point meet-diameter for values $k' \leq k^{1/\ell}$. This allows us to get an $\ell$-approximation to both types of $k$-meet-diam in time $O(mn + \ell \left\lceil k^{1/\ell}\right\rceil n^{\left\lceil k^{1/\ell}\right\rceil + 1})$, for any positive integer $\ell$.

\subsection{Technical Overview}

We highlight a few key techniques in our results, demonstrating where this new distance metric retains similar structure to ordinary distance and where its definition as a global optimization problem makes it behave very differently. 

The first notable difference between meet-distance and ordinary distance is that meet-distance is not a metric as it does not satisfy the triangle inequality. This renders many traditional graph parameter approximation techniques inapplicable. However, we can show  a more restricted, one-side triangle inequality by relaxing one of the distances to an ordinary distance:
\[
d(a,x) + d^{\max / +}(x,b) \geq d^{\max / +}(a,b).
\]

Conceptually, this inequality tells us that $a$ and $b$ can meet by first $a$ going to any vertex $x$, and then meeting at the optimal meeting point for $x$ and $b$. This interaction between ordinary and meet-distance implies that in a diameter $D$ graph, any vertex falling within ordinary distance $D/2$ from a diameter endpoint will have meet-distance $\geq D/2$ from the other diameter endpoint, giving us a 2-approximation to the diameter, i.e. if $d^{\max/+}(a,b) = D$ and $d(a,x) \leq D/2$ then $d^{\max/+}(x,b) \geq D/2$.

Using this property we introduce a shrinking argument - if a vertex has a large incoming $D/2$ neighborhood we compute its meet-eccentricity. If this meet-eccentricity is not $\geq D/2$ we know that no diameter endpoint lies in this incoming neighborhood and thus we can remove the neighborhood from the set of potential diameter endpoints. We do this until all nodes have small incoming $D/2$ neighborhoods. Now we note that any two points $u,v$ with meet-distance $\leq D/2$ must meet at a point $x$ such that both $u$ and $v$ fall in the incoming $D/2$ neighborhood of $x$. Since this incoming neighborhood is small for every $x$ we can use this argument, together with some additional delicate details (see \cref{sec:diamapprox}), to find a point $u$ that cannot meet all other points at meet-distance $\leq D/2$, even when considering all potential optimal meeting points. This point $u$ therefore has meet-eccentricity $\geq D/2$ are we are finished.

When considering $k$-tuples instead of pairs, we exploit this compositional nature of meeting points in a different way. We break the problem of points $v_1, \ldots, v_k$ meeting into phases - first each smaller tuple of $k'$ points meets: $v_1, \ldots, v_{k'}$ meet at $u_1$, $v_{k'+1}, \ldots, v_{2k'}$ meet at $u_2$ and so on. This gives us $k/k'$ meeting points $u_1, \ldots, u_{k/k'}$ and now we find an optimal meeting point for them. Thus, we have reduced the problem of finding an (approximately) optimal meet point for $k$ vertices into $k/k'+1$ problems of finding an optimal meeting point for a smaller tuple of $k'$ or $k/k'$ points. This is the core idea of our $2$-approximation to $k$-point meet-diameter. To extend this idea to an $\ell$-approximation, we perform a similar argument over a weighted tree of depth at most $\ell$.

In DAGs, on the other hand, we note a strong property that is not present in regular graphs, revealing a sharp distinction between the two objectives. Every DAG has a sink, and for a sink $u$ and any other vertex $v$ we note that $d^{\max/+}(v,u) = d(v,u)$ as the only potential meeting point for this pair is $u$ itself. This means that the largest distance $d(w,u)$ from $u$ is a true meet-distance in the graph and so at most the diameter. On the other hand, by considering $u$ as a potential meeting point for the \meetmaxdiam{}-endpoints, we also have that this distance is a lower bound on the diameter. Meaning this easy-to-compute distance is in fact equal to the \meetmaxdiam{}. This same approach gives a 2-approximation when considering the \meetplusdist{} objective.

Finally, our lower bounds exploit the meeting-point structure in yet another way. We consider the standard problems used to reduce to graph diameter and radius problems - the orthogonal vectors and hitting set problems, which are hard under standard fine-grained assumptions. In both of these problems we have a set of vectors with some question about pairwise orthogonality, meaning that we are asking whether a pair of vectors share a coordinate at which they are both equal to 1. We encode this by forcing the coordinates to become meeting points - when two vectors are non-orthogonal they can simply meet at this shared coordinate. This idea underlies our various lower bound constructions.

\section{Preliminaries}
Let $G=(V, E)$ be a directed weighted graph with $n = |V|$ nodes and $m = |E|$ edges. We assume in this paper that all edge weights are nonnegative integers bounded by $\poly (n)$. Denote by $d(u,v)$ the length of the shortest path in $G$ from $u$ to $v$ and for a set $S$ define $d(u,S) = \min_{s\in S}d(u,s)$. We say $d(u,v) = \infty$ if no such path exists. If $G$ contains no cycles we call it a directed acyclic graph, or DAG.

Define the outgoing ball of radius $r$ around a point $x$ as $\outball(x,r) = \{y\in V : d(x,y) \leq  r\}$ and the incoming ball as $\inball(x,r) = \{y\in V : d(y,x) \leq  r\}$.

Given a parameter $T$ we wish to compute, we say an algorithm returns an $(\alpha, \beta)$-approximation to $T$ if it outputs a value $\tilde{T}$ satisfying $T \leq \tilde{T}\leq \alpha\cdot T + \beta$, or equivalently $\frac{T - \beta}{\alpha}\leq \tilde{T}'= \frac{\tilde{T}-\beta}{\alpha}\leq T$.

Define the following two notions of meet-distance:

\begin{definition}[\meetmaxdist]
    $\dmax(u,v)\coloneqq \min_{w\in V} \max(d(u,w), d(v,w))$.
\end{definition}

\begin{definition}[\meetplusdist]
    $\dplus(u,v)\coloneqq \min_{w\in V} d(u,w) +d(v,w)$.
\end{definition}

For any vertex $v\in V$ we define its eccentricity to be the largest distance $d(v,u)$ for some $u\in V$. The largest eccentricity in the graph (or the largest distance in the graph) is known as the \emph{diameter}, while the smallest eccentricity is known as the \emph{radius}. We extend these definitions to the meet-distance and define meet-distance eccentricities, denoted by $\eccmax(v), \eccplus(v)$, and analogous notions of diameter and radius, denoted by \meetmaxdiam, \meetplusdiam, \meetmaxradius{} and \meetplusradius.

Next, we generalize these definitions of meet-distance to $k$-tuples of points.

\begin{definition}[$k$-\meetmaxdist]
    $\dmax(v_1, \ldots, v_k) \coloneqq \min_{w\in V}\max_{1\leq i \leq k} d(v_i, w)$.
\end{definition}

\begin{definition}[$k$-\meetplusdist]
    $\dplus(v_1, \ldots, v_k) \coloneqq \min_{w\in V}\sum_{i=1}^k d(v_i, w)$.
\end{definition}

We define the \kmeetmaxdiam{k} and  \kmeetplusdiam{k} accordingly.

In our algorithm we make use of the following framework of Cohen \cite{Cohen97} for estimating the size of the neighborhoods of all vertices in the graph.

\begin{lemma}\label{lem:cohenframework}[Neighborhood Estimation\cite{Cohen97}]
    Given a directed $n$-node $m$-edge graph $G$, parameters $r, \varepsilon$ and a subset $W\subseteq V$, there exists an algorithm running in time $\tilde{O}(m\varepsilon^{-2})$ that computes with high probability an estimate $\hat{n}(v)$ for each vertex $v\in V$ satisfying 
    \[
    ||\inball(v, r)\cap W| - \hat{n}(v)| \leq \varepsilon |\inball(v, r)\cap W|.
    \]
\end{lemma}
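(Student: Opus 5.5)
This is Cohen's size-estimation framework. The plan is to reproduce it with min-hash sketches computed by a multi-source Dijkstra on the reversed graph.

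\emph{Single estimator.} Assign each $w\in W$ an independent random rank $\rho(w)$ drawn from $\mathrm{Exp}(1)$, or uniform on $[0,1]$. For every $v$, compute
\[
\mu(v)=\min\{\rho(w): w\in \inball(v,r)\cap W\}.
\]
If $N(v)=|\inball(v,r)\cap W|$, then $\mu(v)$ is the minimum of $N(v)$ i.i.d.\ $\mathrm{Exp}(1)$ variables, so $\mu(v)\sim \mathrm{Exp}(N(v))$. Its reciprocal therefore carries information about $N(v)$.

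\emph{Computing all $\mu(v)$.} Process the $w\in W$ in increasing order of rank. From each $w$, run a Dijkstra along out-edges, which are the edges of $G$ leaving $w$, up to distance $r$. Prune at any vertex $v$ that was already assigned a $\mu$ value, and assign $\mu(v)=\rho(w)$ to each newly reached vertex.

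The pruning is the delicate step, because a vertex first reached from an earlier $w'$ might still have to relay a search from $w$. Cohen handles this by recording, for each vertex, the distance at which it was settled. A later search from $w$ may pass through $v$ only if $d(w,v)<$ the stored distance from the earlier search. Each time a vertex is re-entered, its stored distance strictly decreases. Because the ranks are processed in random order, the expected number of such decreases per vertex is $O(\log n)$, by the standard harmonic-number argument for random permutations. This gives total time $\tilde O(m)$ per sketch, and it is the main technical point.

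\emph{Concentration.} Repeat with $k=\Theta(\varepsilon^{-2}\log n)$ independent rank assignments. Set
\[
\hat n(v)=\frac{k-1}{\sum_{j}\mu_j(v)}.
\]
The sum $\sum_j\mu_j(v)$ is $\mathrm{Gamma}(k,N(v))$, so this estimator is unbiased. Gamma tail bounds, which are Chernoff bounds for sums of exponentials, show that the sum lies within a factor $1\pm\varepsilon/3$ of $k/N(v)$ with probability $1-n^{-c}$. Hence $|\hat n(v)-N(v)|\le\varepsilon N(v)$. A union bound over all $v$ gives the result with high probability.

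If $N(v)=0$, then every $\mu_j(v)=\infty$ and we output $0$, which is exact. The total running time is $k\cdot\tilde O(m)=\tilde O(m\varepsilon^{-2})$.
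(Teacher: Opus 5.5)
Your proposal is correct and is essentially the paper's argument: the paper simply invokes Cohen's Theorem 5.1 on the edge-reversed graph with the random ranks restricted to $W$, and you unpack that same framework. Concretely, you run rank-ordered Dijkstra searches from the vertices of $W$ with stored-distance pruning, then take $\Theta(\varepsilon^{-2}\log n)$ independent exponential-rank repetitions and use the estimator $(k-1)/\sum_j\mu_j(v)$. Drop your first pruning rule (prune at any already-assigned vertex), which, as you note, is wrong for a fixed radius $r$; keep only the rule letting $w$ pass through $v$ when $d(w,v)$ is strictly below the stored distance. The $O(\log n)$ bound on re-entries per vertex under that rule holds with high probability, not just in expectation, so the running-time bound needs no extra argument.
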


This lemma is obtained from Theorem 5.1 of \cite{Cohen97} by switching the direction of the edges (taking the graph with edges in reverse direction) and limiting the random ranking only to vertices of $W$.

\subsection{A Simple Triangle Inequality}
\begin{lemma}[Meet Triangle Inequality]
\label{lem:triangle}
For any $a, x, b \in V$,
\[d(a, x) + \dmax(x, b) \geq \dmax(a, b)\]
and
\[d(a, x) + \dplus(x, b) \geq \dplus(a, b).\]
\end{lemma}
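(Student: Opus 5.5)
The plan is to exhibit an explicit candidate meeting point for the pair $(a,b)$, built from an optimal meeting point for $(x,b)$, and then use the fact that $\dmax(a,b)$ and $\dplus(a,b)$ are minima over all $w\in V$.

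First I dispose of the degenerate cases. If $d(a,x)=\infty$, or if $\dmax(x,b)=\infty$ (respectively $\dplus(x,b)=\infty$), the left-hand side is $\infty$ and the inequality is trivial. Otherwise, let $w^*$ be a minimizer in the definition of $\dmax(x,b)$, respectively $\dplus(x,b)$. It exists because $V$ is finite, and both $d(x,w^*)$ and $d(b,w^*)$ are finite.

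The key step is the ordinary triangle inequality for shortest paths in the directed graph: concatenating a shortest $a\to x$ path with a shortest $x\to w^*$ path gives $d(a,w^*)\le d(a,x)+d(x,w^*)$. Using $w^*$ as a candidate meeting point for $(a,b)$ then gives the two bounds.

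For the max objective,
\[
\dmax(a,b)\le \max\bigl(d(a,w^*),d(b,w^*)\bigr)\le \max\bigl(d(a,x)+d(x,w^*),\,d(b,w^*)\bigr)\le d(a,x)+\max\bigl(d(x,w^*),d(b,w^*)\bigr).
\]
The last step holds because $d(a,x)\ge 0$, since edge weights are nonnegative. The right-hand side equals $d(a,x)+\dmax(x,b)$.

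For the sum objective,
\[
\dplus(a,b)\le d(a,w^*)+d(b,w^*)\le d(a,x)+d(x,w^*)+d(b,w^*)=d(a,x)+\dplus(x,b).
\]

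There is no real obstacle here. The only points needing care are the nonnegativity of $d(a,x)$ in the max case and the handling of infinite distances, both dealt with above.
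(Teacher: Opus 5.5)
Your proof is correct and follows essentially the same route as the paper: take an optimal meeting point $w^*$ for $(x,b)$, apply the ordinary triangle inequality $d(a,w^*)\le d(a,x)+d(x,w^*)$, and use $w^*$ as a candidate meeting point for $(a,b)$. Your write-up is slightly more careful than the paper's, which only sets aside the case $d(a,x)=\infty$. You also explicitly handle $\dmax(x,b)=\infty$ (resp.\ $\dplus(x,b)=\infty$), and you note where nonnegativity of $d(a,x)$ is used in the max case.
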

\begin{proof}

    Assume $d(a,x) < \infty$ as otherwise the claim is trivial. Consider a vertex $y$ that is an optimal meeting point for $x$ and $b$, i.e., 
    \begin{align*}
    \dmax(x, b) &= \max(d(x, y), d(b, y)) \\
    &\geq \max(d(a, y) - d(a, x), d(b, y)) \\
    &\geq \max(d(a, y), d(b, y)) - d(a, x) \\
    &\geq \dmax(a, b) - d(a, x),
    \end{align*}
    where the second line follows via the standard triangle inequality. Similarly,

    \begin{align*}
    \dplus(x, b) &= d(x, y) + d(b, y) \\
    &\geq d(a, y) - d(a, x) + d(b, y) \\
    &\geq \dplus(a, b) - d(a, x).
    \end{align*}

\end{proof}
 We observe a more intuitive proof of the lemma. Under both notions of meet-distance, the meet-distance for two agents starting at $a$ and $b$ is at most the time it takes for the first agent to reach $x$ from $a$ plus the meet-distance between two agents starting at $x$ and $b$.

 \subsection{Computing Eccentricity}

First, we consider how to compute the \meetmaxdist{} \defn{eccentricity}, $\eccmax(x) := \max_y \dmax(x, y)$. Similarly, $\ecc^{+}(x) := \max_y \dplus(x, y)$.

\begin{lemma} \label{lem:maxecc}
    We can compute $\eccmax(x)$ in time $\tilde{O}(m)$.
\end{lemma}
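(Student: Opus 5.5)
We compute $\eccmax(x)$ with one forward Dijkstra and one multi-source, reverse-graph Dijkstra whose initial labels come from the first run. The key observation is that for a fixed $x$ and any $y$,
\[
\dmax(x,y) = \min_{w\in V} \max\bigl(d(x,w), d(y,w)\bigr).
\]
If we define the potential $p(w) \coloneqq d(x,w)$, then $\dmax(x,y)$ is a ``bottleneck-shifted'' distance from $y$ to the set of all vertices, where reaching $w$ costs $\max(p(w), d(y,w))$.

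The steps are as follows.

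\textbf{Step 1.} Run Dijkstra from $x$ in $G$ to obtain $p(w)=d(x,w)$ for all $w$. This takes $\tilde{O}(m)$ time.

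\textbf{Step 2.} Work in the reversed graph $G^R$. We want, for every $y$, the quantity
\[
f(y) = \min_w \max\bigl(p(w), d_{G^R}(w,y)\bigr).
\]
This is not an ordinary shortest path with additive source offsets, because the cost combines the source label and the path length by a max rather than a sum. However, the max-combination is monotone and can be handled by a modified Dijkstra. Initialize every $w$ with label $p(w)$ and use the relaxation rule
\[
\text{label}(v) \gets \min\bigl(\text{label}(v),\, \max(p(v), \text{label}(u)+\ell(v,u))\bigr)?
\]
This naive rule is wrong, since the path length and the source label must be tracked separately. The correct approach is a decision formulation or a sweep.

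\textbf{Step 3 (the intended approach).} Observe that $\dmax(x,y)\le t$ iff some $w$ has $p(w)\le t$ and $d(y,w)\le t$. For each $y$ we compute $g(y) = \min_w \max(p(w), d(y,w))$ by processing candidate meeting points in increasing order of $p(w)$. The answer for $y$ is governed by a trade-off between $p(w)$ and $d(y,w)$.

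A cleaner route is to note that $\max(a,b)$ lies within a factor 2 of $a+b$, but we need exactness, so we avoid that. Instead we use a Dijkstra with labels equal to pairs, keyed by the max. Let
\[
h(y) = \min_w \max\bigl(p(w), d(y,w)\bigr).
\]
Consider an optimal $w$ and the shortest path $y=y_0\to y_1\to\cdots\to w$. For each $y_i$ on this path, $h(y_i) \le \max(p(w), d(y_i,w))\le h(y)$. So along an optimal path, the values $h$ are at most $h(y)$ and the remaining lengths decrease.

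This suggests a Dijkstra on $G^R$ keyed by $h$ that finalizes vertices in increasing order of $h$. When $v$ is finalized, we must relax using the actual remaining length, not $h(v)$, so each vertex must carry the pair $(p(w), d(v,w))$. Different $w$ give incomparable pairs, and that is the main obstacle.

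\textbf{Step 4 (resolving the obstacle).} Use the monotonicity: $\dmax(x,y)\le t$ iff $y\in\inball(W_t, t)$, where $W_t=\{w : p(w)\le t\}$. So we should sweep $t$ upward. A vertex $y$ gets answer $t$ at the first moment it lies within distance $t$ of $W_t$. Equivalently,
\[
h(y) = \min_w \max\bigl(p(w), d(y,w)\bigr),
\]
and we can compute it with a single Dijkstra on $G^R$ whose key is $\max(p(w),\cdot)$. We maintain, for each $v$, the best pair in lexicographic-by-max order, and break ties by the smaller distance. The validity argument is that among candidates with the same max the smaller residual length dominates, and larger-max candidates are never needed once a smaller max is reachable, since the max can only increase along extensions.

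The final step is to take $\eccmax(x)=\max_y h(y)$. Verifying this dominance claim carefully is the step I expect to be the main difficulty. Concretely, a candidate with larger $p$ but smaller distance could become better after extension, so the plan needs $p$ to be capped by the current key, and it is unclear that this cap always resolves the conflict. If the dominance claim fails, I would fall back to a binary search over $t$ using $\inball$ from $W_t$. That would cost $O(m\log n)$ per threshold, but a single threshold does not yield all $h(y)$, so the fallback is incomplete.
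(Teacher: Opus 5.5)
\paragraph{Gap: the proposal never reaches a working algorithm.} Step 4 rests on a dominance claim that is false, and you then discard the approach that does work.

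The failure is exactly the one you worried about. Take edges $x\to w_1$ of length $1$, $v\to w_1$ of length $5$, $x\to v$ of length $10$ and $u\to v$ of length $10$. Your Dijkstra on $G^R$ settles $w_1$ with key $1$. It then replaces $v$'s initial pair $(p(v),0)=(10,0)$ by $(p(w_1),d(v,w_1))=(1,5)$, since the key $5$ is smaller. Finally it gives $u$ the value $\max(1,15)=15$. But meeting at $v$ shows $\dmax(x,u)\le\max(d(x,v),d(u,v))=10$, and in fact $\dmax(x,u)=10$. So a single label per vertex, keyed by the max, does not compute $h$.

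\paragraph{The missing idea.} You never need the individual values $h(y)$, only $\eccmax(x)=\max_y h(y)$. Your own equivalence, $\dmax(x,y)\le t\iff y\in\inball(W_t,t)$ with $W_t=\outball(x,t)$, already gives a decision procedure for that:
\[
\eccmax(x)\le t \iff \bigcup_{w\in W_t}\inball(w,t)=V .
\]
You test this with the forward Dijkstra from $x$ (done once) plus one multi-source Dijkstra in $G^R$ started from all of $W_t$. That is $\tilde{O}(m)$ time per threshold, and the predicate is monotone in $t$.

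Let $M=\poly(n)$ be the largest edge weight. Since weights are nonnegative integers, $\eccmax(x)$ is either $\infty$ or an integer in $[0,nM]$. One test at $t=nM$ detects the infinite case. Otherwise binary search needs only $O(\log n)$ tests, for $\tilde{O}(m)$ total.

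This is precisely the paper's proof. The fallback you called incomplete was complete all along: a threshold query only has to tell you whether every $h(y)$ is at most $t$, not the value of each $h(y)$.
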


\begin{proof}
Given a value $r$ we can perform the following algorithm. We first compute $W = \outball(x, r)$ via a forwards Dijkstra search from $x$. We then compute $U = \cup_{w \in W} \inball(w, r)$ via a reverse\footnote{A Dijkstra search on the graph with reversed edges.} Dijkstra search on the set $W$. This gives us $U = \{v \in V \mid \dmax(x, v) \leq r\}$. If $U = V$, then we know $\eccmax(x) \leq r$. Otherwise, $\eccmax(x) > r$. This has runtime $\tilde{O}(m)$. Thus, by performing a binary search over $r$ we can find the value of $\eccmax(x)$ in $\tilde{O}(m)$ time.
\end{proof}

\begin{lemma}\label{lem:plusecc}
    We can compute $\eccplus(x)$ in time $\tilde{O}(m)$.
\end{lemma}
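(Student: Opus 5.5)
Compute $\eccplus(x)=\max_y \min_w \bigl(d(x,w)+d(y,w)\bigr)$ with one forward Dijkstra and one multi-source reverse Dijkstra, plus a virtual super-source; no binary search is needed.

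First, run a forward Dijkstra from $x$ to get $d(x,w)$ for every $w\in V$ in time $\tilde{O}(m)$.

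Second, build the reversed graph $G^R$ and add a new vertex $s$. For every $w$ with $d(x,w)<\infty$, add an edge $s\to w$ of weight $d(x,w)$. A shortest path from $s$ to $y$ in this augmented graph uses exactly one $s$-edge, to some $w$, and then a reversed path from $w$ to $y$, i.e.\ an original path from $y$ to $w$. Hence its length is $\min_w \bigl(d(x,w)+d(y,w)\bigr)=\dplus(x,y)$. So one Dijkstra from $s$, on a graph with $n+1$ vertices and at most $m+n$ edges, returns $\dplus(x,y)$ for all $y$ simultaneously in $\tilde{O}(m)$ time. Equivalently, this is a reverse Dijkstra seeded with initial labels $d(x,w)$ at each $w$.

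Third, output $\eccplus(x)=\max_y \dplus(x,y)$, which is $\infty$ if some $y$ cannot meet $x$.

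The only step needing care is correctness of the seeded Dijkstra. All initial labels and edge weights are nonnegative, so standard Dijkstra correctness applies to the augmented graph with source $s$, and the identity between $s$-to-$y$ distances and $\dplus(x,y)$ follows from the decomposition of paths just described. Edge weights are polynomially bounded integers, so all labels fit in words and the total time is $\tilde{O}(m)$.

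The same seeding gives $\dmax$ by replacing the $+$ with $\max$ in path lengths, which matches the binary-search argument of \cref{lem:maxecc}. For $\dplus$ the additive combination is exactly what Dijkstra computes, so the argument is more direct.
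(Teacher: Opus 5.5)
Your proof is correct and is essentially the paper's. The paper adds a super-sink $t$ with an edge $(v,t)$ of weight $d(x,v)$ for every $v$ and runs Dijkstra into $t$, which is exactly your super-source $s$ in the reversed graph.

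One caveat on your closing aside, which is not needed for the lemma: replacing $+$ by $\max$ does not give $\dmax(x,\cdot)$ in a single Dijkstra run. The cost $\max(d(x,w),d(y,w))$ has no optimal substructure with a scalar label: a path with a small seed and long tail can beat one with a large seed and short tail at an intermediate vertex, yet lose after both are extended. This is why the paper uses thresholds and binary search for $\eccmax$.
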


\begin{proof}
    We first compute a Dijkstra search out of $x$ to find $d(x, v)$ for all vertices $v$. We then create a new vertex $t$, in a modified graph $G'$, and for every vertex $v$, we add an edge $(v, t)$ of weight $d(x, v)$. We then perform a Dijkstra search into $t$. We claim that the vertex $v$ with largest distance $d(v, t)$ will satisfy $\eccplus(x) = d(v, t)$. To prove this claim, we first observe that $\dplus(x, v) = d(v, t)$. This follows since, if the optimal meeting point of $x$ and $v$ in $G$ is at some vertex $u$, then there is a path from $v$ to $u$ and then $u$ to $t$ in $G'$ with the same cost as $\dplus(x, v)$. Any shorter path utilizing an edge $(u', t)$ would imply that meeting at $u'$ has lower cost than meeting at $u$ in $G$, and thus $u$ is not an optimal meeting point for $v$ and $x$, a contradiction. Thus maximizing over all $d(v, t)$, we find the \meetplusdist{} eccentricity of $x$. This requires only two applications of Dijkstra's algorithm, which take $\tilde{O}(m)$ time.
\end{proof}

We note that the above algorithms show us that in near-linear time we can compute \emph{all \meetplusdist s} out of a vertex. On the other hand, we can only compute \meetmaxdist s out of a vertex $x$ with respect to a threshold, i.e. which vertices are within \meetmaxdist{} $r$ of $x$ and which are \meetmaxdist{} $>r$ away. In the $\dplus$ algorithm we are able to compute the optimal meeting point for each pair of vertices by keeping track of the final edge of the paths coming into our searched vertex in the augmented graph.

We conclude that for both notions of meet-distance we can compute all distances in the graph (and thus its exact diameter, radius and eccentricities) in time $\tilde{O}(mn)$, matching the runtime of the standard distance metric.  

\begin{corollary}
    We can compute all-node meet$^{\max}$- and meet$^+$-eccentricities (giving the exact diameter and radius) of a directed $n$-node $m$-edge graph in time $\tilde{O}(mn)$. In the same time we can also compute the \meetplusdist{} between every pair of points (APSP) as well as their optimal meeting point.
\end{corollary}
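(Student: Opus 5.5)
The plan is to run the single-source routines of \cref{lem:maxecc} and \cref{lem:plusecc} once from every vertex $x\in V$. Each call costs $\tilde{O}(m)$, so the $n$ calls together cost $\tilde{O}(mn)$. This yields $\eccmax(x)$ and $\eccplus(x)$ for every $x$. The \meetmaxdiam{} and \meetplusdiam{} are then the maxima of these values over all $x$, and the \meetmaxradius{} and \meetplusradius{} are the minima. Taking these maxima and minima adds only $O(n)$ time.

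For the all-pairs statement, I would look inside the proof of \cref{lem:plusecc}. It builds the augmented graph $G'_x$, which adds a sink $t$ and edges $(v,t)$ of weight $d(x,v)$. That proof already shows $\dplus(x,v)=d_{G'_x}(v,t)$ for every $v$, not just for the maximizer. So a single reverse Dijkstra search into $t$ gives the entire row $\dplus(x,\cdot)$ at once.

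To recover the optimal meeting points, I would have the reverse Dijkstra record, for each $v$, the last edge $(u,t)$ on its shortest path to $t$. This can be propagated along the shortest-path tree: each vertex inherits the label of its parent, with the base case that a vertex reached directly via $(u,t)$ gets label $u$. The recorded $u$ then satisfies
\[
d(v,u)+d(x,u)=d_{G'_x}(v,t)=\dplus(x,v),
\]
so $u$ is an optimal meeting point for $x$ and $v$. Repeating over all $x$ fills the $n\times n$ table of distances and meeting points in $\tilde{O}(mn)$ total time.

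There is no real obstacle here. The only point needing care is confirming that the argument of \cref{lem:plusecc} applies pointwise to every $v$. This holds because the lower bound direction (any path into $t$ corresponds to a meeting point) and the upper bound direction (any meeting point gives a path of the same cost) are both stated per vertex.

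For $\dmax$ the statement only claims eccentricities, not all pairs, so the thresholded binary-search routine of \cref{lem:maxecc} suffices. That routine does not directly list all distances $\dmax(x,\cdot)$, and we do not need it to.
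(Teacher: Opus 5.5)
Your proposal is correct and follows the paper's argument: run the $\tilde{O}(m)$ single-source routines of \cref{lem:maxecc} and \cref{lem:plusecc} from every vertex, read off the whole row $\dplus(x,\cdot)$ from the Dijkstra search into $t$ in the augmented graph, and recover optimal meeting points by tracking the final edge $(u,t)$ of each shortest path. Your check that the recorded $u$ satisfies $d(v,u)+d(x,u)=\dplus(x,v)$ simply spells out what the paper states in one line.
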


\subsection{Fine-Grained Hypotheses}

We prove our diameter lower bounds through a reduction from the orthogonal vectors (OV) problem, defined as follows. Given a set $A$ of $n$ $d$-dimensional boolean vectors, the OV-problem asks if there exists an orthogonal pair, $a,b\in A$ such that $a\cdot b = 0$. Williams  \cite{williams2005} showed that, in the word-RAM model of computation with $O(\log n)$ bit words, when $d=\omega(\log n)$, solving the OV problem requires $n^{2-o(1)}$ time under the Strong Exponential Time Hypothesis (SETH) \cite{impagliazzopaturi2001, impagliazzo2001problems}.

\begin{hypothesis}[Orthogonal Vectors Hypothesis, implied by SETH]
    No algorithm running in time $O(n^{2-\delta})$ can solve the orthogonal vectors problem on $n$ vectors of dimension $\omega(\log n)$ for any $\delta>0$, in the word-RAM model of computation with $O(\log n)$ bit words.
\end{hypothesis}

We cast the OV-problem as a problem on a graph with $n$ vertices representing the vectors of $A$ and $d$ vertices representing the coordinates $[d]$. As in prior work, we add an edge between $a\in A, c\in [d]$ when $a[c]=1$. 

We can assume w.l.o.g that every vector $a\in A$ has some coordinate $i$ for which $a[i] = 1$ and every coordinate $i$ has some $a\in A$ for which $a[i]=1$ as otherwise we can determine the answer to the problem in the first case, or ignore this coordinate and remove it from the construction in the latter case.

For radius lower bounds we often require different quantifiers in our base problem. We prove our radius lower bounds through a reduction from the hitting set (HS) problem, defined as follows. Given two sets $A,B$ of $n$ $d$-dimensional vectors, the HS-problem asks if there exists a vector $a\in A$ that \emph{hits}  every vector $b\in B$, i.e. $\exists a\in A $ such that $\forall b\in B~~a\cdot b \neq 0$. This problem was introduced by Abboud, Vassilevska W. and Wang and conjectured to take $n^{2-o(1)}$ time \cite{fixedparam16}.

\begin{hypothesis}[Hitting Set Conjecture\cite{fixedparam16}]
    No algorithm running in time $O(n^{2-\delta})$ can solve the hitting set problem on $n$ vectors of dimension $\omega(\log n)$ for any $\delta>0$, in the word-RAM model of computation with $O(\log n)$ bit words. 
\end{hypothesis}

We represent the HS-problem in an analogous way to our representation of the OV-problem.  We can assume w.l.o.g that every vector  $b\in B$ has some coordinate $i$ for which $b[i] = 1$ and every coordinate $i$ has some $a\in A$ for which $a[i]=1$ as otherwise we can again determine the answer to the problem in the first case, or ignore this coordinate and remove it from the construction in the latter case.

\section{Diameter and Radius Approximation}\label{sec:diamapprox}
In this section we show a tight approximation to the meet-diameter. First we show that in $\tilde{O}(m\sqrt{n})$ time we can compute a 2-approximation to both notions of meet-diameter. In the lower bounds section we show that this approximation factor is in fact optimal and no subquadratic time algorithm can achieve a $(2-\varepsilon)$-approximation.

We improve this approximation for DAGs. Here we have different results for the two notions of meet-distance; in linear time, we are able to compute the exact \meetmaxdiam{} of a DAG. Under our lower bounds, this cannot be done for \meetplusdiam. We instead achieve a $2$-approximation of the \meetplusdiam{} in linear time.

For radius we show an entirely different picture: we prove a lower bound for both notions of meet-radius showing that \emph{any finite approximation} to the meet-radius requires quadratic time, even in DAGs.

\subsection{Diameter Approximation Algorithms}

\begin{theorem}\label{thm:2approx}
There is a $2$-approximation for \meetmaxdiam{} and \meetplusdiam{}, running in time $\tilde{O}(m\sqrt{n})$.
\end{theorem}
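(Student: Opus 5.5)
The plan is to reduce to a threshold (decision) procedure and wrap it in a binary search over the answer. Fix a guess $D$. The procedure will do one of two things:
\begin{itemize}
\item exhibit a vertex whose exactly computed meet-eccentricity (via \cref{lem:maxecc} or \cref{lem:plusecc}) is at least $D/2$, or
\item certify that the meet-diameter is less than $D$.
\end{itemize}
The algorithm outputs the largest eccentricity it ever computes. Every value it reports is a genuine meet-eccentricity, so it is at most the diameter. The procedure guarantees that it is at least half the diameter. Since edge weights are polynomially bounded, the binary search costs only a logarithmic factor. The whole argument is identical for $\dmax$ and $\dplus$, because the only structural facts used are \cref{lem:triangle} and the symmetry of meet-distance.

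The key observation is as follows. Suppose $\dmax(a,b)=D$ (respectively $\dplus(a,b)=D$) and $a\in\inball(x,D/2)$. Then \cref{lem:triangle} gives $\dmax(x,b)\ge D-d(a,x)\ge D/2$, so $\eccmax(x)\ge D/2$; the same holds for $\dplus$ and $\eccplus$. The shrinking phase uses this as follows.
\begin{enumerate}
\item Maintain a candidate set $W\subseteq V$, initially $W=V$, which always contains both endpoints of any pair at meet-distance at least $D$.
\item In each round, run the estimator of \cref{lem:cohenframework} with $r=D/2$, constant $\varepsilon$, and the current $W$. This takes $\tilde O(m)$ time.
\item If some $x$ has estimated $|\inball(x,D/2)\cap W|\ge\sqrt n$, compute the eccentricity of $x$ in $\tilde O(m)$ time. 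If it is at least $D/2$, stop and report it.
\item Otherwise, by the key observation, no diameter endpoint lies in $\inball(x,D/2)$. Delete $\inball(x,D/2)\cap W$ from $W$; it is found by one reverse Dijkstra search.
\end{enumerate}
Each deletion removes $\Omega(\sqrt n)$ vertices, so there are $O(\sqrt n)$ rounds and the total cost is $\tilde O(m\sqrt n)$. When the phase ends, either $W=\emptyset$, in which case the diameter is less than $D$, or every $x\in V$ satisfies $|\inball(x,D/2)\cap W|=O(\sqrt n)$.

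In the final phase I would use the following fact. If $u,v\in W$ have meet-distance at most $D/2$ under either notion, they have a common meeting point $x$ with $u,v\in\inball(x,D/2)\cap W$; for $\dplus$ this holds because each summand is at most the sum. Consequently, the set of vertices of $W$ within meet-distance $D/2$ of $u$ is exactly $\bigcup_{x\in\outball(u,D/2)}\inball(x,D/2)\cap W$. The goal is to find a vertex $u\in W$ for which this union misses some vertex of $W$; that vertex $u$ then has eccentricity at least $D/2$, and checking a single candidate costs $\tilde O(m)$. If the diameter is at least $D$, such a $u$ exists, namely a diameter endpoint. The difficulty is locating one using only $O(\sqrt n)$ such checks.

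This last step is the main obstacle. My first attempt would be a sampling and charging argument that exploits the small in-balls:
\begin{itemize}
\item Pick $O(\sqrt n\log n)$ random vertices of $W$ and compute their eccentricities.
\item If all of them are below $D/2$, argue that each sampled $u$ witnesses coverage of $W$ by the union above. Because every in-ball contains only $O(\sqrt n)$ vertices of $W$, the out-ball of $u$ must then touch $\Omega(|W|/\sqrt n)$ distinct meeting points.
\item Use these meeting points to show that some vertex near a diameter endpoint's meeting region has a large in-ball, contradicting the end of the shrinking phase. Failing that, use them to prune $W$ further in another round.
\end{itemize}
I expect this coverage-versus-small-in-ball counting to be the delicate part. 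The shrinking phase and the binary search are routine applications of \cref{lem:cohenframework} and \cref{lem:triangle}.
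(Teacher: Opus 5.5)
Your binary search, the key observation from \cref{lem:triangle}, and the shrinking phase match the paper's setup. However, the proof stops exactly where the real work is. You never show how to find, with only $\tilde O(\sqrt n)$ eccentricity computations, a vertex of eccentricity at least $D/2$ once shrinking ends. The sketched sampling-and-charging idea is not an argument: knowing that random vertices of $W$ have small eccentricity says nothing about where a diameter endpoint is.

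The missing idea is to sample $S$ of size $\Theta(\sqrt n\log n)$ uniformly from $V$, not from $W$. Then w.h.p.\ $S$ hits every out-ball $\outball(v,D/2)$ of size at least $\sqrt n$. Compute the meet-eccentricities of all of $S$.
\begin{itemize}
\item If $|\outball(a,D/2)|\ge\sqrt n$ for a diameter endpoint $a$, some $s\in S$ has $d(a,s)\le D/2$. Your key observation then already gives a computed eccentricity of at least $D/2$.
\item Otherwise $d(a,S)>D/2$. Since $a\in W$, the vertex $w\in W$ maximizing $d(w,S)$ (found by one reverse Dijkstra from $S$) has $d(w,S)>D/2$, and hence $|\outball(w,D/2)|<\sqrt n$.
\end{itemize}
This is how you locate a vertex of $W$ with a small out-ball without knowing $a$. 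Computing the eccentricity of this single $w$ finishes the argument.

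Even with $w$ in hand, your shrinking phase does not end in a strong enough state for the final count. The vertices of $W$ within meet-distance $D/2$ of $w$ lie in $\bigcup_{x\in\outball(w,D/2)}\inball(x,D/2)\cap W$. (For $\dplus$ this is only containment, not equality as you claim, but containment suffices.) This union has size at most $|\outball(w,D/2)|\cdot\max_x|\inball(x,D/2)\cap W|$. With your absolute threshold $\sqrt n$, that bound is only $O(n)$, which need not be below $|W|$, especially once $W$ has shrunk.

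The paper instead shrinks with the relative threshold $|W|/\sqrt n$. Each round still removes at least a $1/(2\sqrt n)$ fraction of $W$, so there are $O(\sqrt n\log n)$ rounds. The phase ends in one of two ways:
\begin{itemize}
\item $|W|=O(\sqrt n)$: compute all eccentricities in $W$, which include that of $a$.
\item Every in-ball meets $W$ in fewer than $|W|/\sqrt n$ vertices: then the union above has fewer than $\sqrt n\cdot|W|/\sqrt n=|W|$ vertices, so some vertex of $W$ is at meet-distance more than $D/2$ from $w$.
\end{itemize}
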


We prove \thmref{2approx} via a sequence of lemmas. We prove the theorem for \meetmaxdist{} and note that the same proof holds for \meetplusdist. This is because the only properties of the \meetmaxdist{} we use are \cref{lem:triangle} and \cref{lem:maxecc}, which hold true for \meetplusdist{} as well (as Lemmas \ref{lem:triangle} and  \ref{lem:plusecc} respectively).

Throughout, we assume we are given a `guess' $D$ for the true value of the meet-diameter. If the meet-diameter is at least $D$ then our algorithm outputs a pair of points at meet-distance $\geq D/2$. Thus, by performing a binary search over $D$ we obtain a $2$-approximation to the meet-diameter using $O(\log n)$ guesses\footnote{Using a standard technique for graph parameter approximation - we pick a value $D$ and use this algorithm to find out if the diameter is $\geq D/2$ or $<D$. As the possible values of $D$ in a graph with edge weights bounded by $W$ are between $1$ and $n\cdot W = \poly(n)$, this binary search takes logarithmically many queries.}, incurring only a logarithmic overhead.

Given $D$, assume it is in fact a lower bound to the meet-diameter and fix a pair of diameter endpoints $a,b$, i.e. a pair such that $\dmax(a,b) = \meetmaxdiam\geq D$.

\begin{lemma}[Sufficient condition]\label{lem:sufficient}
If $x \in V$ satisfies $d(a, x) \leq D/2$, then $\eccmax(x) \geq D/2$.
\end{lemma}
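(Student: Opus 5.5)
The plan is to apply the Meet Triangle Inequality (\cref{lem:triangle}) with the fixed diameter endpoints $a,b$ and the vertex $x$ in the middle, and then lower-bound the eccentricity of $x$ by its meet-distance to $b$. By \cref{lem:triangle} with $a$, $x$, $b$ in the roles of $a$, $x$, $b$, we have
\[
d(a,x) + \dmax(x,b) \geq \dmax(a,b) \geq D,
\]
where the last step holds because $a,b$ were chosen as diameter endpoints with $\dmax(a,b) = \meetmaxdiam \geq D$.

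Rearranging and using the hypothesis $d(a,x) \leq D/2$ gives
\[
\dmax(x,b) \geq D - d(a,x) \geq D/2 .
\]
Since $\eccmax(x) = \max_y \dmax(x,y)$, taking $y=b$ yields $\eccmax(x) \geq \dmax(x,b) \geq D/2$, which is the claim.

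The only points needing care are bookkeeping. First, $d(a,x)$ is finite by assumption, so the rearrangement is legitimate; the case $\dmax(x,b)=\infty$ is harmless, since then the eccentricity is infinite. Second, the meet-distance is symmetric in its two arguments, so the orientation $\dmax(x,b)$ produced by \cref{lem:triangle} is exactly the quantity appearing in the eccentricity of $x$. The identical argument with the second inequality of \cref{lem:triangle} gives the \meetplusdist{} version. I do not expect a real obstacle here.
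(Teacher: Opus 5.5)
Your proof is correct and is the paper's argument: apply \cref{lem:triangle} to $a,x,b$ to get $\dmax(x,b)\geq \dmax(a,b)-d(a,x)\geq D-D/2=D/2$, then bound $\eccmax(x)\geq \dmax(x,b)$. Your remarks on finiteness of $d(a,x)$ and on symmetry of $\dmax$ are valid bookkeeping that the paper leaves implicit.
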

\begin{proof}
By \lemref{triangle}, $\dmax(x, b) \geq \dmax(a, b) - d(a, x) \geq D - D/2 = D/2$.
\end{proof}

\begin{lemma}[Shrinking lemma]\label{lem:shrink}
There is an algorithm running in time $\tilde{O}(m\sqrt{n})$ that either:
\begin{enumerate}
    \item[(i)] outputs a vertex $v$ with $\eccmax(v) \geq D/2$, or
    \item[(ii)] outputs a set $W \subseteq V$ with $a, b \in W$ satisfying one of:
    \begin{itemize}
        \item[(a)] $|W| = O(\sqrt{n})$, or
        \item[(b)] for all $v \in V$, $|\inball(v, D/2) \cap W| < |W|/\sqrt{n}$.
    \end{itemize}
\end{enumerate}
\end{lemma}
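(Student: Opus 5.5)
We run an iterative shrinking procedure that keeps a candidate set $W$ of possible diameter endpoints, starting from $W = V$. The invariant is that $a,b \in W$ unless we have already produced a vertex of eccentricity $\geq D/2$ and stopped in case (i).

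In each round we first check whether $|W| \le \sqrt{n}$. If so, we output $W$ in case (ii)(a).

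Otherwise we call the neighborhood estimation of \cref{lem:cohenframework} with radius $r = D/2$, the current set $W$, and $\varepsilon = 1/2$. This costs $\tilde{O}(m)$ and gives estimates $\hat{n}(v)$ within a factor $1 \pm 1/2$ of $|\inball(v,D/2)\cap W|$ for all $v$. We then split into two cases.

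If every $v$ has $\hat{n}(v) < \tfrac{1}{2}|W|/\sqrt{n}$, then every true count is below $|W|/\sqrt{n}$. We output $W$ in case (ii)(b).

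Otherwise there is some $v$ with $\hat{n}(v) \ge \tfrac12 |W|/\sqrt{n}$, and hence $|\inball(v,D/2)\cap W| \ge \tfrac13 |W|/\sqrt{n}$. We compute $\eccmax(v)$ in $\tilde{O}(m)$ time by \cref{lem:maxecc}.
\begin{itemize}
\item If $\eccmax(v) \ge D/2$, we output $v$ in case (i).
\item If $\eccmax(v) < D/2$, we compute $\inball(v,D/2)$ by one reverse Dijkstra search and remove it from $W$.
\end{itemize}

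The removal preserves the invariant. If $a \in \inball(v,D/2)$, then $d(a,v)\le D/2$, and \cref{lem:sufficient} gives $\eccmax(v)\ge D/2$, a contradiction. For $b$ we use the symmetric version: $\dmax$ is symmetric in its two arguments, so \cref{lem:triangle} applied with the roles of $a$ and $b$ swapped gives $\dmax(x,a) \ge D - d(b,x)$. So $b$ also cannot lie in $\inball(v,D/2)$.

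It remains to bound the running time. Each removal round shrinks $W$ by at least a $(1-\tfrac{1}{3\sqrt n})$ factor. Since we stop once $|W|\le\sqrt n$, this alone gives only $O(\sqrt n \log n)$ rounds of $\tilde O(m)$ work each, i.e. $\tilde{O}(m\sqrt n)$ total. That already matches the claimed bound, because the logarithm is absorbed into $\tilde O$.

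The main obstacle is making this round count rigorous while using the true counts rather than the estimates. Multiplicative shrinkage by $(1-c/\sqrt n)$ from size $n$ down to $\sqrt n$ takes $O(\sqrt n\log n)$ steps, which is fine. The other concern is the high-probability guarantee of \cref{lem:cohenframework} over polynomially many invocations. We handle it with a union bound, boosting the success probability of each call to $1 - 1/\poly(n)$.
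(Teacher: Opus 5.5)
Your proposal is correct and follows essentially the same approach as the paper. In each round you use Cohen's estimator to find a vertex whose incoming $D/2$-ball contains an $\Omega(1/\sqrt{n})$ fraction of $W$, compute its eccentricity, and otherwise delete that ball (keeping $a,b\in W$ via Lemma~\ref{lem:sufficient} and the symmetry of $\dmax$), for $O(\sqrt{n}\log n)$ rounds of $\tilde{O}(m)$ work. Your choice $\varepsilon=1/2$ with an explicit estimate-based test for condition (b) just spells out a step the paper leaves implicit; its $\varepsilon=1/3$ works the same way.
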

\begin{proof}
Initialize $W = V$. At each step, if $W$ already satisfies (a) or (b), we terminate. Otherwise, there exists a vertex $v$ such that $|\inball(v, D/2) \cap W| \geq |W|/\sqrt{n}$. Using \cref{lem:cohenframework} with $\varepsilon=1/3, r = D/2$, we can find a $v$ with $|\inball(v, D/2) \cap W| \geq |W|/2\sqrt{n}$  in $\tilde{O}(m)$ time.

We compute $\eccmax(v)$ in $\tilde{O}(m)$ time using \cref{lem:maxecc}. If $\eccmax(v) \geq D/2$, we output $v$ and terminate. Otherwise, \lemref{sufficient} implies $d(a, v) > D/2$ and $d(b, v) > D/2$, so $a,b \notin \inball(v, D/2)$. We set $W \leftarrow W \setminus \inball(v, D/2)$, preserving the invariant $a, b \in W$, and reducing $|W|$ by a factor of at least $(1 - 1/2\sqrt{n})$.

After $O(\sqrt{n} \log n)$ iterations, either (a) is satisfied (since $n \cdot (1 - 1/2\sqrt{n})^{2\sqrt{n}\log n} < 1$) or (b) was reached earlier. Each iteration takes $\tilde{O}(m)$ time, for a total of $\tilde{O}(m\sqrt{n})$.
\end{proof}

\begin{proof}[Proof of \thmref{2approx}]
Sample a set $S \subseteq V$ of size $\Theta(\sqrt{n} \log n)$ uniformly at random. With high probability, $S$ intersects every ball $\outball(v, D/2)$ of size at least $\sqrt{n}$. Compute the \meetmaxdist -eccentricity of every vertex in $S$ in time $\tilde{O}(m\sqrt{n})$.

\paragraph{Case 1: $|\outball(a, D/2)| \geq \sqrt{n}$.}
Then $S$ hits $\outball(a, D/2)$, so some $s \in S$ satisfies $d(a, s) \leq D/2$. By \lemref{sufficient}, $\eccmax(s) \geq D/2$, and this value was already computed.

\paragraph{Case 2: $|\outball(a, D/2)| < \sqrt{n}$.}
We invoke \lemref{shrink}. If it returns a vertex $v$ with $\eccmax(v) \geq D/2$, we are done. Otherwise, it returns a set $W$ with $a, b \in W$. We handle the two sub-cases.

\subparagraph{Case 2a: $|W| = O(\sqrt{n})$.}
Since $a \in W$, computing $\eccmax(v)$ for all $v \in W$ recovers $\eccmax(a) \geq D$. This takes $\tilde{O}(m\sqrt{n})$ time.

\subparagraph{Case 2b: $W$ satisfies condition (b).}
Let $w = \arg\max_{u \in W} d(u, S)$. We claim $\eccmax(w) > D/2$, so that computing $\eccmax(w)$ yields the desired approximation.

First, $d(w, S) > D/2$: we can assume that $S$ does not hit $\outball(a, D/2)$, or this case would have been handled as in Case~1, so $d(a, S) > D/2$. Since $a \in W$ and $w$ maximizes $d(\cdot, S)$ over $W$, the claim follows.

Second, $|\outball(w, D/2)| < \sqrt{n}$: if $|\outball(w, D/2)| \geq \sqrt{n}$, then $S$ would hit this ball, giving $d(w, S) \leq D/2$, a contradiction.

Finally, the set of vertices $u$ with $\dmax(w, u) \leq D/2$ is contained in
\[
\bigcup_{z \in \outball(w, D/2)} \inball(z, D/2),
\]
since $\dmax(w, u) \leq D/2$ requires a meeting point $z$ with $d(w, z) \leq D/2$ and $d(u, z) \leq D/2$. Restricting to $W$, this union covers at most $|\outball(w, D/2)| \cdot |W|/\sqrt{n} < \sqrt{n} \cdot |W|/\sqrt{n} = |W|$ vertices of $W$. Hence some vertex of $W$ has meet-distance greater than $D/2$ from $w$, so $\eccmax(w) > D/2$.

\medskip

In all cases, we find a vertex with eccentricity at least $D/2$ and output a pair of vertices with meet-distance $\geq D/2$. The total running time is $\tilde{O}(m\sqrt{n})$.
\end{proof}

\subsubsection{Improved Approximations for DAGs}

We now consider computing or approximating the meet-diameter in directed acyclic graphs (DAGs). In fact, the only property of a DAG that we use is the existence of a sink, so the following claims are true for any directed graph that has a sink vertex.

\begin{claim}\label{clm:dagmaxdiag}
    There is an algorithm running in linear time that computes the exact \meetmaxdiam{} of a DAG.
\end{claim}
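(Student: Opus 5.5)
The plan is to reduce everything to a single reverse shortest-path computation into a sink. The key point is that in a DAG, a sink $u$ can reach only itself, so for every $v$ the only possible meeting point of $v$ and $u$ is $u$. Hence $\dmax(v,u)=\max(d(v,u),d(u,u))=d(v,u)$.

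First I will dispose of the case of several sinks. Computing out-degrees in linear time finds all sinks. If there are two distinct sinks $u_1,u_2$, their out-balls are $\{u_1\}$ and $\{u_2\}$, which are disjoint. So they have no common meeting point, $\dmax(u_1,u_2)=\infty$, and the algorithm outputs $\infty$. The graph cannot be empty, and a DAG always has a sink, so the remaining case is a unique sink $u$. Then every vertex $v$ reaches $u$: follow any out-edges from $v$, which must terminate since the graph is acyclic, and they can only terminate at a sink, which is $u$.

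Now I compute $R=\max_{v\in V} d(v,u)$ with one Dijkstra search on the reversed graph from $u$. This is linear time if we instead relax edges in reverse topological order, which is valid since the graph is a DAG. I claim the \meetmaxdiam{} equals $R$, and I prove the two directions separately.
\begin{itemize}
\item \emph{Lower bound.} Let $v^*$ attain $R$. By the sink observation, $\dmax(v^*,u)=d(v^*,u)=R$, so the diameter is at least $R$.
\item \emph{Upper bound.} For any pair $a,b$, use $u$ as a candidate meeting point. This gives $\dmax(a,b)\le \max(d(a,u),d(b,u))\le R$.
\end{itemize}
Hence the \meetmaxdiam{} is exactly $R$.

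The only step needing real care is the reachability argument. The upper bound relies on every vertex actually reaching the common sink, which is why the multiple-sink case must be separated out and shown to give infinite diameter. The rest is a single linear-time traversal. As the paper remarks, the argument uses only the existence of a sink, or more precisely a unique sink reachable from all vertices.
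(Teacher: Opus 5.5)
Your proof is correct and is essentially the paper's argument: output $\max_v d(v,u)$ for a sink $u$, use $\dmax(v,u)=d(v,u)$ for the lower bound, and use $u$ as a common meeting point for the upper bound. Your separate multiple-sink case is harmless but not needed, since if some vertex cannot reach $u$ then $R=\infty$ and both directions still hold verbatim; your reverse-topological-order relaxation is a nice touch that makes the running time genuinely linear rather than the $\tilde{O}(m)$ of Dijkstra.
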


\begin{proof}
    Let $u$ be a vertex with no outgoing edges. Run Dijkstra's into $u$ to compute and output $\tilde{D} = \max_{v\in V} d(v,u)$. We now argue that $\tilde{D} = \meetmaxdiam$.

    First, since $u$ is a sink we have that $d(u,w) = \infty$ for any $w\neq u$. Thus for any $v\in V$ 
    \[
    \dmax(v,u) = \min_{w}\max\{d(v,w), d(u,w)\} = d(v,u).
    \]
    Therefore, there exists $v$ such that $\tilde{D} = d(v,u) = \dmax(u,v)$ and so $\tilde{D}\leq \meetmaxdiam$.

    On the other hand, let $a,b$ be a pair of diameter endpoints, $\dmax(a,b) = \meetmaxdiam$. Then,
    \[
    \dmax(a,b) = \min_{w}\max(d(a,w), d(b,w))\leq \max(d(a,u), d(b,u)) \leq d(v,u) = \tilde{D},
    \]
    as $\tilde{D}$ maximizes the distance to $u$. We conclude that $\tilde{D} = \meetmaxdiam$ and can be computed in linear time.
\end{proof}

\begin{claim}\label{clm:dagplusdiam}
    There is an algorithm running in linear time that computes a 2-approximation to the \meetplusdiam{} of a DAG.
\end{claim}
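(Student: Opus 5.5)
We mirror the proof of \cref{clm:dagmaxdiag}. Since $G$ is a DAG it has a sink $u$ (a vertex with no outgoing edges). We run a single Dijkstra search into $u$ on the reversed graph and output $\tilde{D} = \max_{v\in V} d(v,u)$, taking the maximum over vertices that can reach $u$. This takes linear (or $\tilde{O}(m)$) time; on a DAG one can instead do dynamic programming in topological order. We will show $\tilde{D} \leq \meetplusdiam \leq 2\tilde{D}$, so that $\tilde{D}$, or $2\tilde{D}$ depending on the convention for $(\alpha,\beta)$-approximation, is a 2-approximation.

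\textbf{Lower bound.} Since $u$ is a sink, $d(u,w)=\infty$ for all $w\neq u$. Hence in the definition of $\dplus(v,u)$ the only finite candidate meeting point is $w=u$, which gives $\dplus(v,u) = d(v,u) + d(u,u) = d(v,u)$. Taking $v$ attaining the maximum, $\tilde{D} = \dplus(v,u) \leq \meetplusdiam$.

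\textbf{Upper bound.} Let $a,b$ be \meetplusdiam{} endpoints. Using $u$ as a meeting point,
\[
\dplus(a,b) \leq d(a,u) + d(b,u) \leq 2\max_{v} d(v,u) = 2\tilde{D}.
\]
Alternatively, $\dplus \leq 2\dmax$ combined with \cref{clm:dagmaxdiag} gives $\meetplusdiam \leq 2\,\meetmaxdiam = 2\tilde{D}$.

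\textbf{Main obstacle: infinite distances.} The delicate point is a vertex that cannot reach $u$. The cleanest fix is to treat the diameter as $\infty$ whenever some pair has no common reachable vertex. If some $v$ has $d(v,u)=\infty$, then $\dplus(v,u)=\infty$, and the algorithm correctly reports $\infty$ because $\tilde{D}=\infty$. Otherwise every vertex reaches $u$, so all meet-distances are finite and both bounds above hold verbatim. The same case split already applies implicitly to \cref{clm:dagmaxdiag}, so no new ideas are needed beyond the sink observation.
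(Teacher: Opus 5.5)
Your proof is correct and uses the same idea as the paper: a sink $u$ forces $\dplus(v,u)=d(v,u)$ and also serves as a meeting point for every pair. The only difference is the estimator. You output $\max_v d(v,u)$, which by \cref{clm:dagmaxdiag} equals \meetmaxdiam{}, and get the underestimate $\tilde D \le \meetplusdiam \le 2\tilde D$; the paper outputs the sum of the two largest distances into $u$ and gets the overestimate $\meetplusdiam \le \tilde D \le 2\,\meetplusdiam$.

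One small inconsistency to fix: in your first paragraph the maximum should range over all vertices, not only those that reach $u$. Your last paragraph relies on $\tilde D=\infty$ whenever some vertex cannot reach $u$.
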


\begin{proof}
    Let $u$ be a vertex with no outgoing edges. Run Dijkstra's algorithm into $u$ to compute and output $\tilde{D} = \max_{v_1\neq v_2} d(v_1,u) +d(v_2, u)$, i.e. the sum of the two largest distances to $u$. Denote by $D^+$ the true \meetplusdiam, we argue that $D^+\leq \tilde{D}\leq 2\cdot D^+$.

    Again, since $u$ is a sink we have that $d(u,w) = \infty$ for any $w\neq u$ and so for any $v\in V~~\dplus(v,u) = d(v,u)$.
    Therefore, there exist $v_1, v_2$ such that \[
    \tilde{D} = d(v_1,u) + d(v_2, u) = \dplus(v_1,u) + \dplus(v_2, u) \leq 2D^+.
    \] 
    
    On the other hand, let $a,b$ be a pair of diameter endpoints, $\dplus(a,b) = D^+$. Then,
    \[
    D^+ = \dplus(a,b) = \min_{w}d(a,w)+ d(b,w)\leq d(a,u)+ d(b,u) \leq \tilde{D}.
    \]
    as $\tilde{D}$ maximizes the sum of such distances to $u$. We conclude that $D^+\leq \tilde{D}\leq 2\cdot D^+$ and so $\tilde{D}/2$ is a $(2,0)$-approximation to the diameter and can be computed in linear time.
\end{proof}

\subsection{Lower Bounds}

In this section we show that a 2-approximation for either variant of meet-diameter is optimal, as any improvement over this approximation factor requires quadratic running time under SETH. We show that in DAGs, approximating the \meetplusdiam{} beyond a $3/2$-factor requires quadratic time, implying no linear-time exact algorithm exists under SETH.

For the radius problem we show that any finite approximation is hard, for either notion of meet-distance, even in the restricted case of DAGs.

In this section, we assume the word-RAM model of computation with $O(\log n)$ bit words. All lower bound theorems are for this model, and we will omit including it in the statements of the theorems.

\subsubsection{Diameter Lower Bounds}

\begin{theorem} \label{thm:2lb}
    Under SETH, computing a $(2-\varepsilon, O(1))$-approximation to the \meetmaxdiam{} or \meetplusdiam{} requires $m^{2-o(1)}$ time.
\end{theorem}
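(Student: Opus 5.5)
The plan is to reduce from OV, following the encoding described in the preliminaries: one vertex per vector $a\in A$, one vertex per coordinate $c\in[d]$, and an edge $a\to c$ of weight $L$ whenever $a[c]=1$. Here $L$ is a large constant, so that the additive $O(1)$ term becomes negligible. The graph has $N=n+d$ vertices and $m=O(nd)=\tilde O(n)$ edges when $d=\omega(\log n)$ is polylogarithmic. A subquadratic $(2-\varepsilon,O(1))$-approximation would therefore solve OV in $n^{2-\delta}$ time, contradicting the OV Hypothesis. Non-orthogonal vectors $a,b$ can meet at a shared coordinate, and the reduction has to make this the only cheap way for two vectors to meet. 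Every pair that does not consist of two orthogonal vectors must also meet cheaply.

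For \meetmaxdiam{} the graph is the following. Keep the edges $a\to c$ of weight $L$, and add an edge $c\to a'$ of weight $L$ for every coordinate $c$ and every vector $a'$. This is still $O(nd)$ edges. The pairwise meet-distances then work out as follows.
\begin{itemize}
\item A pair $(c,c')$ of coordinates meets at any vector, so $\dmax(c,c')\le L$.
\item A pair $(a,c)$ meets at $a$ via the edge $c\to a$, so $\dmax(a,c)\le L$.
\item Two non-orthogonal vectors meet at a shared coordinate, so $\dmax(a,b)=L$.
\end{itemize}
For the orthogonal case, the key observation is that every vertex reachable from $a$ within distance $<2L$ is one of $a$'s own coordinates. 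If $a\cdot b=0$, the sets of coordinates reachable from $a$ and from $b$ are disjoint, so any common meeting point is at distance $\ge 2L$ from one of them. Meeting at $b$ shows $\dmax(a,b)=2L$. The \meetmaxdiam{} is therefore $2L$ if an orthogonal pair exists and $L$ otherwise. Choosing $L$ larger than a constant depending on $\varepsilon$ and on the additive constant, a $(2-\varepsilon,O(1))$-approximation distinguishes the two cases.

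For \meetplusdiam{} the same graph does not give the gap. With it, non-orthogonal vectors have $\dplus=2L$, but orthogonal ones also achieve $2L$ by meeting at $b$, since $d(a,b)=2L$. The fix must make $d(a,b)$ large while keeping coordinate--coordinate and vector--coordinate pairs cheap. I would first try routing the return edges through a separate copy: coordinates point to vertices $a^{\mathrm{in}}$ with weight $\approx L$ or smaller. The vector $a$ reaches its own $a^{\mathrm{in}}$ cheaply, say by weight $0$, but $a^{\mathrm{in}}$ has no out-edges, so it is a sink. The aim is that $a$ and $b$ can no longer meet at $b^{\mathrm{in}}$ cheaply unless $a$ reaches $b^{\mathrm{in}}$ through a shared coordinate. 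This needs $d(a,b^{\mathrm{in}})$ to be roughly $3L$ or more when $a\cdot b=0$, while vector--coordinate and coordinate--coordinate pairs still meet at some $a^{\mathrm{in}}$ with total cost about $2L$.

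Balancing these weights so that all "irrelevant" pairs have $\dplus\le 2L+O(1)$ while orthogonal pairs have $\dplus\ge(4-o(1))L$ is the step I expect to be the main obstacle. The \meetmaxdist{} case is essentially immediate from the construction above. If a clean factor-$2$ gap cannot be reached this way, I would add hub vertices, or use the two-set version of OV with separate out-copies and in-copies of vectors, so that every vector--vector path has length $\ge 4L$ except through a shared coordinate.
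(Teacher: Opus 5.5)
\paragraph{The \meetmaxdiam{} half.} This part is correct. Within distance $<2L$ of $a$ lie only $a$ itself and its own coordinates. You omit $a$, but $\{a\}\cup N(a)$ and $\{b\}\cup N(b)$ are still disjoint for an orthogonal pair. So orthogonal pairs have $\dmax=2L$ and every other pair has $\dmax\le L$. Absorbing the additive error with a constant weight $L$ is legitimate in the paper's weighted model. It is simpler than the paper's chains of copies of $A$ and $U$, though those chains give the stronger unweighted statement.

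\paragraph{The gap: the \meetplusdiam{} half.} The theorem also asserts the \meetplusdiam{} bound, and here you give a target specification but no construction. The construction you sketch cannot work. If each $a^{\mathrm{in}}$ is a sink, two distinct sinks reach no common vertex. Then $\dplus(a^{\mathrm{in}},b^{\mathrm{in}})=\infty$, and the \meetplusdiam{} is infinite in YES and NO instances alike. Moreover, if coordinates point only to sinks, two coordinates need not have any common reachable vertex. Every pair of vertices must remain able to meet. The ``hub'' and ``two-set'' alternatives are too unspecified to evaluate.

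\paragraph{The missing idea.} Keep your cycle $A\to U\to A$, but let the return edges also follow the incidence pattern. Sending them from every coordinate to every vector is exactly what destroys the gap for $\dplus$, as you noticed. The paper takes copies $A_1,\ldots,A_{t-1}$ of $A$ joined by matchings $a_j\to a_{j+1}$. It adds $a_{t-1}\to u_i$ and $u_i\to a_1$ only when $a[i]=1$. Then one lap leads from $b$ back to $a$ if and only if $a$ and $b$ share a coordinate.

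In a NO instance, of any two vector copies one reaches the other in at most $t$ steps. A coordinate first steps to a vector containing it. Hence all pairwise $\dplus$ values are at most $t+2$.

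For an orthogonal pair $a,b$:
\begin{itemize}
\item Meeting at a copy of $a$ (or $b$) forces the other vertex around two laps, costing at least $2t$.
\item Meeting at a coordinate costs at least $(t-1)+(2t-1)$, since at most one of them reaches it in the first lap.
\item Any other vertex is at distance at least $t$ from both.
\end{itemize}
Hence $\dplus(a_1,b_1)\ge 2t$, and taking $t$ large handles the additive constant.

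This also fits your weighted style. Use edges $a\to c$ of weight $L$ and $c\to a$ of weight $1$, both only when $a[c]=1$. This gives the same dichotomy: at most $L+3$ in NO instances versus at least $2L+2$ for an orthogonal pair.
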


\begin{proof}
    \paragraph{Multiplicative lower bound for \meetmaxdiam}: We begin with a simple lower bound for a pure $(2-\varepsilon)$-approximation to the \meetmaxdiam. Given an instance $A$ of the OV-problem construct a graph $G$ consisting of the set $A$, where each vector is represented by a node, and a set $U$ of $d$ vertices $u_1, \ldots, u_d$.  For every $a\in A$ and $i\in [d]$ we construct an edge from the node $a$ to $u_i$ if the vector $a$ has a 1 at the $i$'th coordinate. Lastly, we add all edges between pairs of points in $U$.

    In this construction, if all pairs of vectors in $A$ are non-orthogonal (the NO-case), the graph $G$ has $\meetmaxdiam = 1$. Indeed, for any pair of vertices $a,b\in A$ the vectors $a,b$ are non-orthogonal so there exists a coordinate $i$ for which $a[i]=b[i]=1$. Thus the edges $(a,u_i)$ and $(b,u_i)$ exist and so $\dmax(a,b) = 1$. For any pair of points     $u,u'\in U$ there exists an edge between them and so $\dmax(u,u')=1$, and for any pair $a\in A, u\in U$ there exists some $u'\in U$ for which the edge $(a,u')$ exists (since it is not the zero-vector). Since the edge $(u,u')$ also exists (unless $u=u'$) we conclude $\dmax(a,u) = 1$.

    On the other hand, if there exists an orthogonal pair $a,b\in A$ (the YES-case), then there exists no $u_i\in U$ for which $d(a,u_i) = d(b,u_i) = 1$ as that would imply that $a,b$ both have $1$ at coordinate $i$ (and thus are not orthogonal). Therefore, $\dmax(a,b) > 1$ so $\dmax(a,b)\geq 2$ and $\meetmaxdiam\geq 2$. 

    Therefore, any $(2-\varepsilon)$-approximation algorithm to \meetmaxdiam{} would solve OV and thus requires $n^{2-o(1)} = m^{2-o(1)}$ time under SETH (as the number of edges in the graph is $\tilde{O}(n)$).

    \paragraph{Mixed lower bound for \meetplusdiam}:
    Next we consider a slightly more involved construction for \meetplusdiam{}. For any integer $t$ define the following construction. Given an instance $A$ of the $OV$-problem construct $t-1$ copies of $A$: $A_1, \ldots, A_{t-1}$. For every $a\in A$ denote its copy in $A_j$ by $a_j$. We construct a matching between each consecutive pair of sets: for every $a\in A$ and $j=1, \ldots, t-2$ add an edge $(a_j, a_{j+1})$. Next, construct a set $U$ of $d$ vertices $u_1, \ldots, u_d$. For every $a\in A$ and every $i\in [d]$ such that $a[i]=1$ add the edges $(a_{t-1}, u_i)$ and $(u_i, a_1)$.

    In the NO-case we claim that $\meetplusdiam\leq t+2$. Indeed, for any $a_j\in A_j, b_\ell\in A_\ell$ assume w.l.o.g that $j\geq \ell$. Since $a,b$ are non-orthogonal there exists a coordinate $i$ for which $a[i]=b[i]=1$. Thus the following length $\leq t$ path connects $a_j$ to $b_\ell$: $a_j, a_{j+1}, \ldots, a_{t-1}, u_i, b_1, \ldots, b_\ell$. Therefore $\dplus(a_j, b_\ell)\leq t$. For any $u\in U, a_j\in A_j$ there exists some $b_1\in A_1$ such that $d(u, b_1) = 1$. Now, by the preceding argument, $d(a_j, b_1) \leq t$, therefore $\dplus(u,a_j)\leq t+1$. Finally, for any pair $u,u'\in U$ there exists $a_1,b_1\in A_1$ such that $d(u,a_1)=d(u',b_1)=1$. Since $d(a_1, b_1)\leq t$ we conclude that $d(u,b_1)\leq t+1$ and so $\dplus(u,u')\leq t+2$.

    On the other hand, in the YES-case we claim that $\meetplusdiam{\geq 2t}$. Let $a,b$ be a pair of orthogonal vectors and consider $\dplus(a_1, b_1) = d(a_1, x) + d(b_1,x)$ for some node $x$. If $x = u_i \in U$ and $d(a_1, x) = d(b_1, x) = t-1$ then both $a$ and $b$ must have $1$ at coordinate $i$, in contradiction to them being orthogonal. Thus, at least one of the distances from $a_1$ or $b_1$ to $x$ must be at least $2t-1$ (going around the loop twice) and so $\dplus(a_1, b_1) > 2t$. If $x = c_j\in A_j$ and $d(a_1, x)< t$ then $c = a$ as the path from $a_1$ to $c_j$ contains only matching edges, meaning $x = a_j$. Now, if $d(b_1, a_j) < 2t$ then the path from $b_1$ to $a_j$ consists of a series of matching edges, $b_1, \ldots, b_{t-1}$, followed by a vertex in $U$,  $u_i$, followed by a vertex $f_1\in A_1$ and another series of matching edges $f_1, \ldots, f_j = a_j$. Thus we must have $f=a$. Now, the edges $(b_{t-1}, u_i), (u_i, a_1)$ indicate that both $a$ and $b$ have a 1 at coordinate $i$, in contradiction to them being orthogonal. We conclude that  $d(a_1, x) < t$ implies that $d(b_1, x) \geq 2t$ and thus $\dplus(a_1, b_1) \geq 2t$ and $\meetplusdiam \geq 2t$.

    For any constants $\varepsilon, C>0$ we can pick large enough $t$ ($t > \frac{4+C-2\varepsilon}{\varepsilon}$) so that a $(2-\varepsilon, C)$-approximation differentiates between $\leq t+2$ and $\geq 2t$. Thus, a $(2-\varepsilon, C)$-approximation for the \meetplusdiam{} solves OV and requires $m^{2-o(1)}$ time under SETH. 

    \paragraph{Mixed lower bound for \meetmaxdiam}:
    Note that the above lower bound rules out not just $(2-\varepsilon)$-approximations to the \meetplusdiam{} but also ones with constant additive error. We now extend our lower bound for \meetmaxdiam{} to rule out such mixed approximations as well. Given an integer $t$, we replace the set $A$ with $t$ copies of it $A_1, \ldots, A_{t}$, adding a matching between $A_i$ and $A_{i+1}$ for $i=1, \ldots, t-1$. We then take a set $U$ of size $d$ representing the coordinates and add the edges from $a\in A_{t}$ to coordinates $u_i$ for which $a[i]=1$. We then add ${t-1}$ additional copies of $U$: $U_1, \ldots, U_{t-1}$. We add a matching between $U$ and $U_1$ and $U_j, U_{j+1}$ for $j=1, \ldots, t-2$. Lastly, we add all edges from $U_{t-1}$ to $U$. 
    
    The same argument shown above gives us that in the NO-case we have $\meetmaxdiam \leq t$ and in the YES-case we have $\meetmaxdiam \geq 2t$. Thus, by picking a large enough value of $t$, any $(2-\varepsilon, O(1))$-approximation to the \meetmaxdiam{} would solve OV and thus require $m^{2-o(1)}$ time under SETH.

\end{proof}

\begin{theorem}\label{thm:daglb}
    Under SETH, computing a $(3/2 - \varepsilon, O(1))$-approximation to the \meetplusdiam{} in a DAG requires $m^{2-o(1)}$ time.    
\end{theorem}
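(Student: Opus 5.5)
We reduce from OV, using the same ``coordinates as meeting points'' idea as in \cref{thm:2lb}. The difference is that acyclicity forces all long meetings to happen at one common sink $z$. The target gap is $\meetplusdiam \le 2t$ in the NO-case versus $\meetplusdiam \ge 3t$ in the YES-case, for a large even integer parameter $t$. A $(3/2-\varepsilon, C)$-approximation distinguishes these once $(3/2-\varepsilon)2t + C < 3t$, i.e.\ $t > C/(2\varepsilon)$.

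\textbf{Construction.} Given an OV instance $A$, build the following DAG.
\begin{itemize}
\item For every vector $a\in A$, add a directed path (a ``chain'') $a = a^{(0)} \to a^{(1)} \to \dots \to a^{(t-1)}$ of unit edges.
\item Add coordinate vertices $u_1,\dots,u_d$, with a unit edge $a^{(t-1)}\to u_i$ whenever $a[i]=1$. Thus $d(a,u_i)=t$ exactly when $a[i]=1$.
\item From each $u_i$, add a directed path of $t/2$ unit edges ending at a single sink $z$.
\end{itemize}
Sharing the chain before the branching point is the key design choice. Every vertex at depth $s$ on $a$'s chain then behaves like $a$, with all its distances reduced by $s$. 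The graph is acyclic with $O(nt + nd + dt) = \tilde{O}(n)$ edges for constant $t$, so a subquadratic algorithm would refute OV.

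\textbf{NO-case: every pair has $\dplus \le 2t$.} I would check the vertex types in turn.
\begin{itemize}
\item Two chain vertices of vectors $a\neq b$. Pick a coordinate $k$ with $a[k]=b[k]=1$; meeting at $u_k$ costs at most $t+t=2t$.
\item Two chain vertices of the same vector. Meet at the deeper one, costing less than $t$.
\item A chain vertex and a vertex on some $u_i\to z$ path, or a coordinate vertex. Meet at $z$. Every chain vertex reaches $z$ within $t+t/2$, because every vector has some $1$-coordinate (by our w.l.o.g.\ assumption). The other vertex reaches $z$ within $t/2$. The total is at most $2t$.
\item Two vertices on $u\to z$ paths. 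Meet at $z$, costing at most $t$.
\end{itemize}

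\textbf{YES-case: $\dplus(a,b)\ge 3t$ for an orthogonal pair $a,b$.} Classify the candidate meeting points $w$.
\begin{itemize}
\item $w$ on $a$'s chain. Then $b$ cannot reach $w$, so the cost is infinite; symmetrically for $b$'s chain.
\item $w=u_i$, or $w$ an internal vertex of the $u_i\to z$ path. The only way to reach $w$ is through $u_i$, which requires a $1$ at coordinate $i$. Both $a$ and $b$ reaching $w$ would contradict orthogonality.
\item $w=z$. Each of $a,b$ needs exactly $t+t/2$ to reach $z$, for a total of $3t$.
\end{itemize}
Hence $\meetplusdiam\ge 3t$ in the YES-case.

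\textbf{Main obstacle.} The delicate step is the NO-case bound for the auxiliary subdivision vertices. A naive design, such as separate weighted paths per edge or a direct $a\to z$ shortcut, lets two vertices deep inside unrelated paths require about $3t$ to meet even in the NO-case. That would destroy the gap. The shared chain together with the coordinate-then-sink structure prevents this. So the verification I would do most carefully is that every mixed pair, especially a chain vertex against a $u\to z$ path vertex, stays within $2t$. If that verification failed, I would adjust the length of the $u\to z$ paths (currently $t/2$) to rebalance the two sides of the bound.
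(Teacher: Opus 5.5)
Your proof is correct and is essentially the paper's construction up to rescaling. The paper's $2t$ matched copies of $A$ (equivalently, per-vector chains) feed the coordinate vertices, followed by $t$ matched copies of the coordinates into a single sink $x$; this is your DAG with your $t$ playing the role of the paper's $2t$, so the paper's $4t$ versus $6t$ gap becomes your $2t$ versus $3t$. The only case your YES-case leaves implicit is a meeting point on a third vector's chain, which is trivially unreachable from both $a$ and $b$.
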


\begin{proof}
    Consider an OV instance $A$ and take $2t$ copies of $A$, $A_1, \ldots, A_{2t}$ with a matching from each $A_j$ to $A_{j+1}$. We then take a set $U_1$ of size $d$, $U_1=\{u_1, \ldots, u_d\}$ representing the coordinates and add the edges from $a\in A_{2t}$ to coordinates $u_i$ for which $a[i]=1$. We add an additional $t-1$ copies of $U_1$: $U_2, \ldots, U_t$ and a matching from each $U_j$ to $U_{j+1}$. Finally, we add a vertex $x$ and add an edge from every vertex in $U_t$ to $x$. We note that this construction is a DAG.

    In the NO-case of the OV-problem we have that any pair of vertices $a_1, b_1\in A_1$ have some coordinate $i$ for which $a[i]=b[i]=1$ and so by meeting at the copy of $u_i$ in $U_1$ the two vertices can meet within meet-distance $4t$. Thus, any pair of points  $u,w\in A_1\cup \ldots \cup A_{2t}$ has $\dplus(u,w)\leq 4t$. On the other hand, any point $w\in U_1\cup \ldots \cup U_t\cup \{x\}$ can meet any other point in the graph in distance $4t$ by meeting at $x$. We conclude that in the NO-case $\meetplusdiam\leq 4t$.

    In the YES-case we have a pair of points $a,b\in A$ such that $a\cdot b = 0$. Thus, by the same arguments we saw in previous lower bounds, $a_1, b_1$ cannot meet at $U_1\cup \ldots \cup U_t$ and so must meet at $x$ with $\dplus(a_1, b_1)=6t$. We conclude that $\meetplusdiam\geq 6t$.

    Thus, by picking a large enough value of $t$, any $(3/2-\varepsilon, O(1))$-approximation to the \meetplusdiam{} would solve OV and therefore, under SETH, requires $m^{2-o(1)}$ time.

\end{proof}

\subsubsection{Radius Lower Bounds}\label{sec:radiuslb}

\begin{theorem} \label{thm:lbradius}
    Under the Hitting Set Conjecture, computing a finite approximation to the \meetmaxradius{} or 
    the \meetplusradius{} in a DAG requires $m^{2-o(1)}$ time.
\end{theorem}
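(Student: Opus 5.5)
The plan is a reduction from the Hitting Set problem to the meet-radius in a DAG. Given $A,B\subseteq\{0,1\}^d$, I build a DAG with $\tilde O(n)$ edges whose meet-radius is a small constant $R$ in the YES case (some $a\in A$ hits every $b\in B$). In the NO case the meet-radius should be either $\infty$ or at least $W$, for a polynomially bounded weight $W\gg \alpha R$. Any $\alpha$-approximation running in time $O(m^{2-\delta})$ would then decide HS in time $O(n^{2-\delta'})$, because $d=\omega(\log n)$ is only polylogarithmic overhead. Both objectives can be handled by one construction, since $\dmax\le\dplus\le 2\dmax$ only loses a factor $2$, which is absorbed into $W$.

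The core gadget follows the OV constructions of \cref{thm:2lb} and \cref{thm:daglb}. There is a vertex for each $a\in A$, a vertex for each $b\in B$, and coordinate vertices $u_1,\dots,u_d$, with edges $a\to u_i$ and $b\to u_i$ whenever the corresponding entry is $1$. Then $a$ and $b$ meet cheaply exactly when $a\cdot b\neq 0$. I would add the following to make the quantifiers work.
\begin{enumerate}
\item A sink $x$ with cheap edges from every $A$-vertex, so that any two $A$-vertices meet at cost $O(1)$.
\item Weights ensuring that $B$-vertices and coordinate vertices are bad centers. For example, two sinks never meet, so any center must reach, or share a reachable vertex with, every sink.
\item A mechanism letting a hitting $a$ meet every coordinate vertex $u_j$ cheaply, including those $j$ with $a[j]=0$.
\end{enumerate}
The YES analysis is direct: the hitting $a^*$ meets each $b$ at a shared coordinate and meets everything else through the auxiliary sinks. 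In the NO case, each $a$ has an unhit $b$. Every route from $b$ to a meeting point with $a$ must then pass through an expensive edge, and every non-$A$ vertex is argued to have large or infinite eccentricity directly.

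The main obstacle is item 3, and \lemref{triangle} shows why it is delicate. If $b\to u_j$ has weight $w_1$, then $\dmax(b,a)\le w_1+\dmax(u_j,a)$, so $a$ cannot meet the coordinate vertices $u_j$ with $b[j]=1$ cheaply while still being far from $b$. The gadget must therefore make the coordinate vertices reachable from $b$ only through expensive edges, or make them distinct from the ones the center needs to meet. My first attempt is to split each coordinate into two copies: an \emph{entry} copy $v_i$ reached from $B$, and an \emph{exit} copy $u_i$ reached from $A$, with $a\to u_i$ of weight $1$ and $u_i\to v_i$ of weight $0$. The $B$-side edges $b\to v_i$ get weight $1$, and every $v_i$ and every $A$-vertex gets an edge of weight $W$ to the common sink $x$. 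Then a hitting $a$ meets each $v_i$ at cost $\max$ about $W$ through $x$. That is too expensive, so I would instead give each $v_i$ a cheap private route into a structure reachable from every $a$ but from no $b$ except through a weight-$W$ edge, for instance a chain of layered copies as in \cref{thm:2lb}.

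If this cannot be arranged, the fallback is to lengthen the $B$-side. Replace each $b$ by a path of $t$ copies $b_1\to\dots\to b_t$ before it reaches the coordinates, as in the mixed lower bounds. Then the meet cost of $b_1$ is forced to scale with $t$ in both cases, and the separation comes from the extra detour through $x$ of weight $W\cdot t$ in the NO case. Checking that this detour is unavoidable for every candidate center, including the coordinate and auxiliary vertices, is the step I expect to need the most care.
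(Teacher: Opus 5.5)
\textbf{There is a genuine gap.} The proposal never arrives at a construction that works. The obstacle you flag in item~3 is not a detail to engineer around; it is fatal for the orientation you chose.

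As long as each $b$ has cheap edges \emph{into} coordinate vertices, \lemref{triangle} gives $\dmax(a,b)\le d(b,v_i)+\dmax(v_i,a)$ for every $i$ with $b[i]=1$. Any mechanism that lets the hitting $a^*$ meet every coordinate vertex cheaply cannot know which $a$ is the hitting one, since it must work even for $j$ with $a^*[j]=0$. So it in effect lets every $a$ meet every coordinate vertex cheaply. Then every $a$ also meets every $b$ cheaply in the NO case.

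Your two attempted fixes run into the same problem:
\begin{itemize}
\item The entry/exit split, with a private route from each $v_i$ into a structure reachable from all of $A$, fails for exactly this reason: $b$ reaches that structure through $v_i$.
\item Lengthening the $B$-side does not escape it either. $b_1$ still reaches a coordinate vertex, so the NO-case cost is at most $t$ plus a cheap meeting. Meanwhile the YES-center already pays about $t$ to meet $b_1$, because nothing in your construction enters the $b$-path. The ratio between the two cases therefore stays bounded, and the ``detour of weight $W\cdot t$'' is never forced.
\end{itemize}

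\textbf{The missing idea is to reverse the $B$-side edges}: put $u_i\to b$ whenever $b[i]=1$.
\begin{itemize}
\item Every $b$ is now a sink, so $\dmax(a,b)=\dplus(a,b)=d(a,b)$. This equals $2$ if $a\cdot b\neq 0$ and $\infty$ otherwise.
\item Since $b$ reaches nothing, the triangle-inequality leak disappears.
\item Item~3 becomes trivial. Add a sink $y$ with edges from all of $A\cup U$. Any $a$ then meets any $a'\in A$ or $u_j\in U$ at $y$ with $\dmax\le 1$ and $\dplus\le 2$, and this gives $a$ no way to meet any $b$.
\item Your two-sinks idea then forces the center into $A$. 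Add a second sink $x$ with edges only from $A$. No vertex of $U\cup B\cup\{y\}$ can meet $x$, and $x$ itself cannot meet $y$. In the NO case every $a$ has an unhit $b$ at distance $\infty$.
\end{itemize}
Hence both meet-radii are $2$ in the YES case and $\infty$ in the NO case. This uses unit weights, with no large weight $W$ and no layered copies. It rules out every finite approximation, and it is exactly the paper's construction.
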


\begin{proof}
    Let $A,B$ be an instance of the hitting set problem. Take the sets $A, B$ and a set $U = \{u_1, \ldots, u_d\}$. For any $a\in A, i\in [d]$ such that $a[i]=1$ connect $a$ to $u_i$ and for any
    $b\in A, i\in [d]$ such that $b[i]=1$ connect $u_i$ to $b$. Add an additional vertex $x$ 
    with all edges from $A$ and an additional vertex $y$ with all edges from $A\cup U$.

    In the YES-case of the HS-problem, there exists $a\in A$ such that $\forall b\in B$ we have $a\cdot b \neq 0$. We claim that $\eccmax(a) = \eccplus(a) = 2$. Indeed, $a$ can meet any vertex in $A$ or $U$ at $y$ with \meetmaxdist{} 1 and \meetplusdist{} 2. It can reach $x,y$ directly in distance 1. For any $b\in B$ it can reach it in 2 steps via $u_i$ for some coordinate $i$ such that $a[i] = b[i] = 1$. Thus, $\dmax (a, b) = \dplus(a, b) = 2$ and we conclude that $\eccmax(a) = \eccplus(a) = 2$.

    In the NO-case, for any $a\in A$ there exists $b\in B$ such that $a\cdot b = 0$. Thus, $a$ cannot reach $b$ with a 2-hop path through $U$. However, this is the only type of path that could possibly allow $a$ to meet $b$ as $b$ is a sink and there are no edges going back into $A$ that would allow longer paths. Thus $\eccmax(a) = \eccplus(a) = \infty$. Furthermore, any vertex outside of $A\cup \{x\}$ cannot meet $x$ and thus has infinite eccentricity and the vertex $x$ cannot meet $y$ and has infinite eccentricity. We conclude that in the NO-case the graph has infinite meet-radius.

    Thus, any finite approximation to the \meetmaxradius{} or \meetplusradius{} would solve HS and under the HSC requires $m^{2-o(1)}$ time. 
    \end{proof}

\section{Extensions to $k$ Points}
We now study the analogous $k$-point \meetmaxdist{} and \meetplusdist{} in a directed graph; recall that our definition of $k$-point meet distance applies to any tuple of $k$ points (possibly with repetition).

We can trivially compute the exact $k$-point \meetmaxdist{} and
\meetplusdist{} diameters in
$O(mn+k n^{k+1})$ time by first computing APSP and then, for each
$k$-tuple, considering each of the $n$ possible meeting points. This gives us the exact values of $\kmeetmaxdiam{k}$ and $\kmeetplusdiam{k}$.

To improve upon this running time, first we observe that for \meetmaxdist{} we can use the $\sqrt{k}$-point diameter to approximate the $k$-point diameter.

\begin{claim}
    $\kmeetmaxdiam{\left\lceil\sqrt{k}\right\rceil}\leq \kmeetmaxdiam{k}\leq 2\cdot\kmeetmaxdiam{\left\lceil\sqrt{k}\right\rceil}$.
    \label{claim:k-2}
\end{claim}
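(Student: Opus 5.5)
The lower bound is a monotonicity statement, and the upper bound is the two-phase meeting argument from the technical overview; I would prove the two inequalities separately. Write $k'=\lceil\sqrt{k}\rceil$ and $D'=\kmeetmaxdiam{k'}$.

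\emph{Lower bound.} Tuples may contain repeated points. So for any $k'$-tuple $(v_1,\ldots,v_{k'})$ I form a $k$-tuple by padding with copies of $v_1$; this is possible since $k'\le k$. The maximum over the padded tuple equals the maximum over the original for every meeting point $w$. Hence $\dmax(v_1,\ldots,v_{k'})=\dmax(v_1,\ldots,v_{k'},v_1,\ldots,v_1)\le\kmeetmaxdiam{k}$. Maximizing over $k'$-tuples gives $\kmeetmaxdiam{k'}\le\kmeetmaxdiam{k}$.

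\emph{Upper bound.} Fix an arbitrary $k$-tuple $(v_1,\ldots,v_k)$.
\begin{enumerate}
\item Partition its indices into $g\le k'$ groups, each of size at most $k'$; this works because $k'\cdot k'\ge k$. Pad each group to exactly $k'$ points by repetition.
\item For each group $j$, let $u_j$ be an optimal meeting point of the padded group. By definition, $d(v_i,u_j)\le D'$ for every $v_i$ in group $j$.
\item Pad $(u_1,\ldots,u_g)$ to a $k'$-tuple and let $w$ be its optimal meeting point. Then $d(u_j,w)\le D'$ for every $j$.
\item For $v_i$ in group $j$, the ordinary triangle inequality gives $d(v_i,w)\le d(v_i,u_j)+d(u_j,w)\le 2D'$.
\end{enumerate}
Therefore $\dmax(v_1,\ldots,v_k)\le\max_i d(v_i,w)\le 2D'$, and maximizing over all $k$-tuples gives $\kmeetmaxdiam{k}\le 2\,\kmeetmaxdiam{k'}$.

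This is the same composition that underlies \lemref{triangle}, applied over two levels.

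\emph{Potential obstacle.} The only delicate points are bookkeeping. First, the group counts must work out via the ceiling ($k'^2\ge k$ ensures at most $k'$ groups of size at most $k'$). Second, infinite distances need a remark: if $D'=\infty$ the upper bound is trivial, and otherwise every $d$ used above is finite.

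The argument genuinely uses the $\max$ objective. For $\dplus$ the second-level sum would weight each $d(u_j,w)$ by its group size, which is why the $+$ case needs the weighted formulation mentioned earlier in the paper.
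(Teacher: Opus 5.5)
Your proof is correct and is essentially the paper's argument. For the lower bound you pad a $\lceil\sqrt{k}\rceil$-tuple with repeated points; for the upper bound you split the $k$ points into at most $\lceil\sqrt{k}\rceil$ groups of size at most $\lceil\sqrt{k}\rceil$, let each group meet, then let the group centers meet, and combine the two legs with the ordinary triangle inequality, making explicit the padding and infinite-distance bookkeeping that the paper leaves implicit.
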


\begin{proof}
    The first inequality follows immediately. The second inequality follows because any $k$ vertices can be partitioned into $\lceil\sqrt{k}\rceil$ components of size at most $\lceil \sqrt{k} \rceil$. Each vertex can first meet up with the other vertices in its component in time at most $\kmeetmaxdiam{\left\lceil\sqrt{k}\right\rceil}$, and a further routing connecting the $\left\lceil\sqrt{k}\right\rceil$ centers of the components will take time at most $\kmeetmaxdiam{\left\lceil\sqrt{k}\right\rceil}$.
\end{proof}

\begin{corollary}
    We can compute a 2-approximation to \kmeetmaxdiam{k} in time $O(mn + \sqrt{k} n^{\lceil\sqrt{k}\rceil + 1})$.
\end{corollary}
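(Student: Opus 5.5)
The plan is to compute $k' := \lceil\sqrt{k}\rceil$-\meetmaxdiam{} exactly and output twice its value. By \cref{claim:k-2}, the $k'$-\meetmaxdiam{} lies between $\kmeetmaxdiam{k}/2$ and $\kmeetmaxdiam{k}$. Hence $\tilde{D} := 2\cdot\kmeetmaxdiam{k'}$ satisfies $\kmeetmaxdiam{k}\leq \tilde{D}\leq 2\cdot\kmeetmaxdiam{k}$, which is a $(2,0)$-approximation in the paper's sense. So everything reduces to computing the exact $k'$-point diameter within the stated time.

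\textbf{Step 1: all-pairs distances.} Run Dijkstra from every vertex to obtain $d(u,w)$ for all pairs. This costs $\tilde{O}(mn)$; in the unweighted or integer-weight setting assumed in the paper's accounting it can be taken as $O(mn)$. Store the distances in an $n\times n$ table.

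\textbf{Step 2: enumerate tuples incrementally.} Let $f(t)=\max_i d(v_i,w)$ over the tuple entries $v_1,\ldots,v_j$ of a partial tuple. Enumerate ordered $k'$-tuples by depth-first search over prefixes $(v_1,\ldots,v_j)$. Maintain an array $M_j[w]=\max_{i\le j} d(v_i,w)$ for every $w\in V$, updated from $M_{j-1}$ in $O(n)$ time when $v_j$ is appended. At depth $k'$, the value $\dmax(v_1,\ldots,v_{k'})=\min_w M_{k'}[w]$ is read off in another $O(n)$ time. The diameter is the maximum of these values over all leaves.

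\textbf{Step 3: cost accounting.} There are $\sum_{j\le k'} n^j = O(n^{k'})$ prefixes, each costing $O(n)$. This gives $O(n^{k'+1})$ time, which is within the claimed $O(\sqrt{k}\, n^{\lceil\sqrt k\rceil+1})$.

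Even without the incremental trick, naively evaluating each $k'$-tuple against each of the $n$ meeting points costs $O(k' n^{k'+1}) = O(\sqrt{k}\,n^{\lceil\sqrt{k}\rceil+1})$, which already matches the bound. So I would present the naive version for simplicity.

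\textbf{Main obstacle.} The only subtle point is justifying \cref{claim:k-2}'s upper bound in the $\max$ setting, which I am allowed to assume. One also needs to check that tuples with repetition are handled; they are, since they are included in the enumeration. Finally, note that $\dmax$ may be $\infty$, and the max/min computations handle this value consistently.
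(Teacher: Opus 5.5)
Your proposal is correct and is essentially the paper's proof: compute \kmeetmaxdiam{\lceil\sqrt{k}\rceil} exactly via all-pairs distances plus exhaustive enumeration of tuples and meeting points, then invoke Claim~\ref{claim:k-2}. Outputting twice this value is equivalent to the paper's choice of outputting the value itself, by the two equivalent forms of the $(2,0)$-approximation definition; your incremental prefix-maximum enumeration is a valid small refinement that removes the $\sqrt{k}$ factor, and the $n^2\log n$ term of Dijkstra-based APSP, which you handle loosely, is absorbed by $n^{\lceil\sqrt{k}\rceil+1}\geq n^3$ for $k\geq 2$.
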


\begin{proof}
    The algorithm simply outputs $\kmeetmaxdiam{\left\lceil\sqrt{k}\right\rceil}$, which is computable exactly in time $O(mn + \sqrt{k} n^{\lceil\sqrt{k}\rceil+1})$. Correctness of approximation follows by Claim \ref{claim:k-2}.
\end{proof}

More generally, we have the following claim.
\begin{claim}
    Let $\ell$ be a positive integer.
    \[\kmeetmaxdiam{\left\lceil k^{1/\ell}\right\rceil}\leq \kmeetmaxdiam{k}\leq \ell \cdot \kmeetmaxdiam{\left\lceil k^{1/\ell}\right\rceil}.\]
    \label{claim:k-l}
\end{claim}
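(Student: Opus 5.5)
Set $k' = \lceil k^{1/\ell}\rceil$ and $D' = \kmeetmaxdiam{k'}$. We prove the two inequalities separately, generalizing the proof of Claim~\ref{claim:k-2} from one level of grouping to $\ell$ levels.

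\emph{Lower bound.} Take any $k'$-tuple $(v_1,\ldots,v_{k'})$ and pad it to a $k$-tuple by repeating $v_1$; this is allowed since tuples may have repetitions and $k'\le k$. The maximum over $i$ of $d(v_i,w)$ does not change when entries are repeated, so the padded $k$-tuple has the same \meetmaxdist{}. Hence $D'\le \kmeetmaxdiam{k}$.

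\emph{Upper bound.} Fix a $k$-tuple $T$. Since $k'^{\ell}\ge k$, we can build a rooted tree of depth $\ell$ in which every internal node has at most $k'$ children and the leaves are the entries of $T$. Concretely, group the $k$ points into at most $\lceil k/k'\rceil$ groups of size at most $k'$, then group those groups the same way, and so on. After $\ell$ levels at most $\lceil k/k'^{\ell}\rceil = 1$ group remains.

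We assign a vertex of $G$ to each tree node bottom-up. A leaf gets its own point. An internal node with children assigned $c_1,\ldots,c_j$ ($j\le k'$) gets an optimal meeting point $w$ of $(c_1,\ldots,c_j)$. As in the lower bound, pad this $j$-tuple to a $k'$-tuple by repetition; then $d(c_i,w)\le D'$ for every child. If the tuple cannot meet at all, then $D'=\infty$ and the claimed bound is trivial.

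Let $r$ be the root's vertex. For each original point $v_i$, follow the tree path from its leaf to the root. Each of the at most $\ell$ edges on this path contributes distance at most $D'$, so the ordinary triangle inequality gives $d(v_i,r)\le \ell D'$. Using $r$ as the meeting point gives $\dmax(T)\le \ell D'$. Maximizing over all $k$-tuples $T$ yields $\kmeetmaxdiam{k}\le \ell\cdot D'$.

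The main point needing care is that each level has at most $k'$ nodes per group and that the depth really is at most $\ell$. This is exactly the inequality $\lceil k^{1/\ell}\rceil^{\ell}\ge k$. A second point is that grouping sizes smaller than $k'$ are handled by the padding argument.
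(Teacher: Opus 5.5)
Your proof is correct and follows the same route as the paper. You build a tree of height at most $\ell$ with branching factor at most $\lceil k^{1/\ell}\rceil$, assign each internal node an optimal meeting point of its children, and bound every leaf-to-root distance by $\ell$ times the $\lceil k^{1/\ell}\rceil$-point meet-diameter via the ordinary triangle inequality. You also spell out details the paper leaves implicit: padding by repetition for the lower bound and for nodes with fewer children, the depth bound from $\lceil k^{1/\ell}\rceil^{\ell}\ge k$, and the infinite case.
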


\begin{proof}
    The first inequality follows immediately. To prove the second inequality, we observe that, for any tuple of $k$ points, we can create a tree with each point as a leaf, maximum branching factor at most $\left\lceil k^{1/\ell} \right\rceil$ and height at most $\ell$. For each internal node, we choose the meeting point to minimize the maximum meeting time of its children, which is at most $\kmeetmaxdiam{\left\lceil k^{1/\ell} \right\rceil}$. By construction, every node in the tree (including the leaves) can reach the root in maximum distance at most $\ell \kmeetmaxdiam{\left\lceil k^{1/\ell}} \right\rceil$.
\end{proof}

\begin{corollary}
    We can compute an $\ell$-approximation to \kmeetmaxdiam{k} in time $O(mn + \left\lceil k^{1/\ell}\right\rceil n^{\left\lceil k^{1/\ell}\right\rceil + 1})$.
\end{corollary}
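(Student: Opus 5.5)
The algorithm outputs $\kmeetmaxdiam{k'}$ with $k' = \left\lceil k^{1/\ell}\right\rceil$. By Claim~\ref{claim:k-l} this value $T$ satisfies $T \leq \kmeetmaxdiam{k} \leq \ell T$, so $\ell T$ is an $\ell$-approximation in the sense of the preliminaries. Correctness is therefore immediate from Claim~\ref{claim:k-l}, and the only work is to show that $\kmeetmaxdiam{k'}$ can be computed exactly within the stated time bound.

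For the running time, I first compute all-pairs shortest paths, for example by running Dijkstra from every vertex. This costs $\tilde{O}(mn)$, or $O(mn)$ in the unweighted or suitably bounded-weight setting the paper uses for the trivial $O(mn+kn^{k+1})$ bound. I then enumerate all $n^{k'}$ tuples $(v_1,\ldots,v_{k'})$. For each tuple I need
\[
\min_{w}\max_i d(v_i,w).
\]
Computing this naively costs $O(k'n)$ per tuple, which gives $O(k' n^{k'+1})$ in total and matches the claimed bound $O(mn + k' n^{k'+1})$.

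The main thing to check is that the factor $k'$ is really only $k'$ and not larger. I would make this explicit by having the enumeration maintain the vector $M_w = \max_{i} d(v_i,w)$ incrementally along a depth-first traversal of the tuple prefixes. Extending a prefix by one vertex updates all $n$ entries in $O(n)$ time. Summed over the prefix tree this costs $O(n\cdot\sum_{j\le k'} n^{j}) = O(n^{k'+1})$, which is even better than needed.

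Either accounting fits within the stated bound. Maximizing $\min_w M_w$ over all tuples then yields $\kmeetmaxdiam{k'}$ exactly, and the corollary follows.
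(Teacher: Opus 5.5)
Your proposal is correct and follows the paper's proof: output the exact $\lceil k^{1/\ell}\rceil$-point meet$^{\max}$-diameter, invoke Claim~\ref{claim:k-l} for the approximation guarantee, and compute that value by all-pairs shortest paths followed by exhaustive enumeration of tuples and meeting points. Your incremental maintenance of $M_w$ along the prefix tree is a valid refinement that removes the $\lceil k^{1/\ell}\rceil$ factor, but it is not needed for the stated bound.
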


\begin{proof}
    It suffices to output $\kmeetmaxdiam{\left\lceil k^{1/\ell} \right\rceil}$, which can be computed exactly in time $O(mn + \left\lceil k^{1/\ell}\right\rceil n^{\left\lceil k^{1/\ell}\right\rceil + 1})$.
\end{proof}

We now consider the \meetplusdist{}. Here, the same argument gives us a seemingly weaker guarantee:
\begin{claim}
    $\kmeetplusdiam{k}\leq 2\left\lceil\sqrt{k}\right\rceil\cdot\kmeetplusdiam{\left\lceil\sqrt{k} \right\rceil}$.
\end{claim}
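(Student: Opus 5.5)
We follow the two-level meeting argument of Claim~\ref{claim:k-2}, keeping track of how the sum objective adds up. Write $s=\lceil\sqrt{k}\rceil$ and $D^+_s=\kmeetplusdiam{s}$. Take any $k$-tuple $(v_1,\ldots,v_k)$ and split it into $g\le s$ groups $G_1,\ldots,G_g$, each of size at most $s$. Pad each group to exactly $s$ points by repeating one of its members; tuples may have repetitions, so this is allowed. For each group $G_j$, choose a meeting point $w_j$ that is optimal for the padded $s$-tuple. Then
\[
\sum_{v\in G_j} d(v,w_j)\le \dplus(\text{padded }G_j)\le D^+_s .
\]
Next pad $(w_1,\ldots,w_g)$ to an $s$-tuple in the same way and let $z$ be an optimal meeting point for it. This gives $\sum_{j} d(w_j,z)\le D^+_s$.

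We then bound the cost of meeting at $z$, since $\dplus(v_1,\ldots,v_k)\le\sum_i d(v_i,z)$. By the ordinary triangle inequality, $d(v,z)\le d(v,w_j)+d(w_j,z)$ for every $v\in G_j$. Summing over all points gives
\[
\sum_i d(v_i,z)\le \sum_j\sum_{v\in G_j} d(v,w_j)+\sum_j |G_j|\, d(w_j,z).
\]
The first term is at most $g\,D^+_s\le s\,D^+_s$. For the second term we use $|G_j|\le s$, so it is at most $s\sum_j d(w_j,z)\le s\,D^+_s$. Together this yields $\dplus(v_1,\ldots,v_k)\le 2s\,D^+_s$, and maximizing over tuples proves the claim.

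The one delicate step is the second term. In the sum objective, each group center $w_j$ must be charged once for every member of its group, so the second-level cost gets multiplied by the group size. This is exactly where the extra factor $\lceil\sqrt{k}\rceil$, absent in the max case, comes from. We also need to check that the padding never decreases the cost: duplicating a point only adds nonnegative terms, so the padded $s$-tuples indeed have meet$^+$-distance at least the unpadded cost and at most $D^+_s$.
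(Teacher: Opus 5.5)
Your proof is correct and follows essentially the same two-level grouping argument as the paper. Groups of size at most $\lceil\sqrt{k}\rceil$ each meet at cost at most $\kmeetplusdiam{\lceil\sqrt{k}\rceil}$, and the group centers then meet at one more such cost, which is charged once per group member. You also make explicit two steps the paper leaves implicit: padding short tuples by repetition, and the factor $|G_j|\le\lceil\sqrt{k}\rceil$ on the second-level cost, which is where the $2\lceil\sqrt{k}\rceil$ comes from.
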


\begin{proof}
    We can again divide any given $k$ vertices into $\lceil \sqrt{k} \rceil$ groups of at most $\lceil \sqrt{k} \rceil$ vertices; each group can meet at a common point in total additive distance $\kmeetplusdiam{\left\lceil\sqrt{k} \right\rceil}$. Summing over all  $\lceil \sqrt{k} \rceil$ groups gives an additive distance of $\left\lceil\sqrt{k}\right\rceil\cdot\kmeetplusdiam{\left\lceil\sqrt{k} \right\rceil}$. These $\lceil \sqrt{k} \rceil$ centers can themselves meet in additive distance $\kmeetplusdiam{\left\lceil\sqrt{k} \right\rceil}$ at a root center. To route all $k$ points to the root center will thus take total additive distance $2\left\lceil\sqrt{k}\right\rceil\cdot\kmeetplusdiam{\left\lceil\sqrt{k} \right\rceil}$.
\end{proof}

However, by taking a $k$-tuple, with composition including $\lfloor k/(\lceil \sqrt{k} \rceil) \rfloor$ copies of the $\lceil \sqrt{k} \rceil$ points that achieve the $\kmeetplusdiam{\lceil \sqrt{k} \rceil}$, we have that 
\[
    \kmeetplusdiam{k} \geq \lfloor k/(\lceil \sqrt{k} \rceil) \rfloor \cdot \kmeetplusdiam{\lceil\sqrt{k}\rceil}.
\]
We observe as a natural consequence that for a square $k$, $\sqrt{k} \cdot \kmeetplusdiam{\sqrt{k}}$ is a $2$-approximation to $\kmeetplusdiam{k}$. We now extend this idea to larger $\ell$-approximations.

\begin{claim}
    Let $\ell$ be a positive integer, and let $k$ satisfy that $k^{1/\ell}$ is an integer. Then
    \[k^{1-1/\ell}\cdot \kmeetplusdiam{k^{1/\ell}}\leq \kmeetplusdiam{k}\leq \ell \cdot k^{1-1/\ell} \cdot \kmeetplusdiam{k^{1/\ell}}.\]
\end{claim}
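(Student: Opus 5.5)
Write $q = k^{1/\ell}$ (an integer) and $D_q = \kmeetplusdiam{q}$. The proof has two independent halves: a lower bound by a duplication construction, and an upper bound by a hierarchical meeting tree.

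\textbf{Lower bound.} I would generalize the duplication observation made just before the claim. Fix a $q$-tuple $(v_1,\ldots,v_q)$ attaining $\dplus(v_1,\ldots,v_q) = D_q$. Form the $k$-tuple in which each $v_i$ appears $k/q = k^{1-1/\ell}$ times; tuples may repeat points, so this is legal. For any meeting point $w$,
\[
\sum_{j=1}^k d(x_j,w) = k^{1-1/\ell}\sum_{i=1}^q d(v_i,w).
\]
Minimizing over $w$ gives $\dplus = k^{1-1/\ell} D_q$. Hence $\kmeetplusdiam{k} \geq k^{1-1/\ell} D_q$.

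\textbf{Upper bound.} I would use the tree argument of Claim~\ref{claim:k-l}, now with sums. Take an arbitrary $k$-tuple and build a complete $q$-ary tree of height $\ell$ whose leaves are the $k$ points. Process the levels bottom-up:
\begin{itemize}
\item At each internal node, the points currently sitting at its $q$ children meet at an optimal $q$-point meeting point.
\item That meeting point becomes the node's position.
\end{itemize}
Each such meeting costs total distance at most $D_q$.

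Next, account for the cost of each original point travelling to the root along this route. At level $h$ (counting from the bottom, $h = 1,\ldots,\ell$) there are $q^{\ell-h}$ internal nodes. A node at level $h$ carries $q^{h-1}$ original agents on each child, since every child subtree has that many leaves. So the total agent-distance spent at level $h$ is at most
\[
q^{\ell-h}\cdot q^{h-1}\cdot D_q = q^{\ell-1} D_q = k^{1-1/\ell} D_q.
\]
The key point is that weighting each child's leg by $q^{h-1}$ is bounded by $q^{h-1}$ times the unweighted sum, which is at most $q^{h-1} D_q$.

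Summing over the $\ell$ levels, every original point reaches the root, and the total distance is at most $\ell\, k^{1-1/\ell} D_q$. The root is one candidate meeting point, so $\dplus$ of the tuple is at most this bound. Taking the maximum over all $k$-tuples gives the upper bound.

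\textbf{Main obstacle.} The delicate step is the per-level accounting. A child node at level $h-1$ stands for $q^{h-1}$ agents, and a naive bound would count a full $D_q$ per agent rather than per child group. The factor $q^{h-1}$ must therefore be pulled out uniformly. This works only because all children at a given level carry the same number of leaves, which is why we need $k^{1/\ell}$ to be an integer so the tree is complete. Once that is clear, the sum telescopes to exactly $k^{1-1/\ell}$ per level, and the rest is routine.
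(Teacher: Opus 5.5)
Your proof is correct and follows the paper's argument. The lower bound uses the same duplication construction ($k^{1-1/\ell}$ copies of an extremal $k^{1/\ell}$-tuple). The upper bound uses the same complete $k^{1/\ell}$-ary meeting tree of height $\ell$, with each level contributing at most $k^{1-1/\ell}\cdot\kmeetplusdiam{k^{1/\ell}}$. Your explicit weighting by $q^{h-1}$ agents per child, and the triangle-inequality step bounding each point's distance to the root, spell out what the paper states in a single line.
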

\begin{proof}
    The first inequality follows since we can construct a $k$-tuple of points that is $k^{1-1/\ell}$ copies of the $k^{1/\ell}$ points with maximum meet distance of  $\kmeetplusdiam{k^{1/\ell}}$. This construction will have meet distance exactly $k^{1-1/\ell}\cdot \kmeetplusdiam{k^{1/\ell}}$.

    For the second inequality, we again construct a tree of depth at most $\ell$ and branching factor $k^{1/\ell}$, in which each internal node is the optimal meeting point of its children. We observe that to calculate the total additive distance for all leaves to reach the root, each level of the tree induces cost $k^{1 - 1/\ell} \kmeetplusdiam{k^{1/\ell}}$, and thus the total additive distance is $\ell k^{1 - 1/\ell} \cdot \kmeetplusdiam{k^{1/\ell}}$.
\end{proof}

\begin{corollary}
    We can compute an $\ell$-approximation to $\kmeetplusdiam{k}$, for k satisfying $k^{1/\ell}$ is integer, in time $O\left(mn + k^{1/\ell} n^{k^{1/\ell} + 1}\right)$.
\end{corollary}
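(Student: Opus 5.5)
The approach is to output $\tilde{D} = k^{1-1/\ell}\cdot \kmeetplusdiam{k^{1/\ell}}$ and obtain the approximation guarantee directly from the preceding claim. Write $k' = k^{1/\ell}$. That claim gives
\[
\tilde{D} \leq \kmeetplusdiam{k} \leq \ell\cdot \tilde{D}.
\]
Hence $\tilde{D}$ is within a factor $\ell$ of $\kmeetplusdiam{k}$. Equivalently, $\ell\tilde{D}$ is an $(\ell,0)$-approximation in the paper's convention $T\le \tilde T\le \alpha T$. The remaining work is to show that $\kmeetplusdiam{k'}$ can be computed exactly within the stated time. The factor $k^{1-1/\ell}$ is then a single multiplication.

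To compute $\kmeetplusdiam{k'}$ exactly, I will first compute all-pairs shortest paths. This takes $O(mn)$ time, using Dijkstra from every vertex, or the $\tilde{O}(mn)$ bound from the corollary in the eccentricity section; for the integer weights assumed here, the stated $O(mn)$ is the standard APSP bound.

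Next I enumerate all $n^{k'}$ tuples $(v_1,\ldots,v_{k'})$, repetitions allowed, as the definition of the $k$-point meet distance permits. For each tuple and each candidate meeting point $w\in V$, I evaluate $\sum_{i=1}^{k'} d(v_i,w)$ and take the minimum over $w$, and then the maximum over tuples.

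Done naively, this costs $O(k')$ per pair (tuple, $w$), for a total of $O(k' n^{k'+1})$, which matches the claimed bound. The sums can even be maintained incrementally as the enumeration extends a tuple one coordinate at a time. That would shave the factor $k'$, but it is not needed.

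There is no real obstacle here. The only care point is the case where some tuple has no common reachable meeting point. Then its meet distance is $\infty$, and the diameter is $\infty$ on both sides of the inequality; the algorithm detects this and outputs $\infty$, which is consistent. Otherwise all quantities are finite and the sandwich inequality applies verbatim.
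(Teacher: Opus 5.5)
Your proposal is correct and matches the paper's proof: output $k^{1-1/\ell}\cdot \kmeetplusdiam{k^{1/\ell}}$, get the factor-$\ell$ guarantee from the preceding sandwich claim, and compute $\kmeetplusdiam{k^{1/\ell}}$ exactly via APSP plus exhaustive enumeration of tuples and meeting points. Your added details (tuples with repetition, the $\infty$ case, and the APSP cost, whose $n^2\log n$ heap term is absorbed by $k^{1/\ell} n^{k^{1/\ell}+1}$ once $k^{1/\ell}\geq 2$) are sound elaborations of what the paper leaves implicit.
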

\begin{proof}
    It suffices to output $k^{1-1/\ell}\cdot \kmeetplusdiam{k^{1/\ell}}$, which can be computed exactly in time $O\left(mn + k^{1/\ell} n^{k^{1/\ell} + 1}\right)$.
\end{proof}

Finally, we show how to obtain an $\ell$-approximation for an arbitrary integer $k$.
\begin{corollary}
    For every positive integer $\ell$, we can compute a value $\widehat D$ satisfying
    \[
        \widehat D\leq \kmeetplusdiam{k}\leq \ell\widehat D
    \]
    in time
    \[
        O\left(nm+\ell k^{1/\ell} n^{\lceil k^{1/\ell} \rceil +1}\right).
    \]
    \label{cor:ell}
\end{corollary}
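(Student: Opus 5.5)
We need an $\ell$-approximation for arbitrary $k$, where $k^{1/\ell}$ need not be an integer. The plan is to use a \emph{weighted} notion of meet-diameter on small tuples, matching the overview's description ("a weighted notion of exact $k'$-point meet-diameter"). Set $q=\lceil k^{1/\ell}\rceil$. Fix a rooted tree $T$ with exactly $k$ leaves, depth at most $\ell$, and branching factor at most $q$; this exists since $q^\ell\geq k$. Give each internal node $x$ of $T$, with children $c_1,\dots,c_j$, the child weights $w_i=$ number of leaves below $c_i$.

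For integer weights $w_1,\dots,w_j$ with $j\le q$, define the weighted meet-diameter
\[
D_x=\max_{v_1,\dots,v_j}\min_{z}\sum_i w_i\, d(v_i,z).
\]
After APSP in $O(mn)$ time, each $D_x$ is computable in $O(j\, n^{j+1})$ time. Nodes with identical multisets of child weights share the same value. We then output
\[
\widehat D=\max_{\text{levels } h}\ \max_{x \text{ at level } h} D_x,
\]
or a suitable combination described below.

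\textbf{Upper bound.} Given any $k$-tuple, assign its points to the leaves and, bottom-up, let each internal node's point be the optimal weighted meeting point of its children's points. Each leaf travels along its root path, and the total cost telescopes by levels. The edges from the children of $x$ to $x$'s point cost $\sum_i w_i d(v_{c_i},v_x)\le D_x$, counted with multiplicity equal to the leaves beneath. Summing over all internal nodes gives
\[
\kmeetplusdiam{k}\le \sum_x D_x.
\]

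\textbf{Lower bound.} For each internal node $x$, the tuple that repeats the worst-case $v_i$ exactly $w_i$ times, padded with copies of some point, has $k$-meet$^+$ cost at least $D_x$. Hence
\[
\kmeetplusdiam{k}\ge \max_x D_x.
\]
Comparing sum to max directly would lose a factor of the number of nodes, so the key step is to compare per level instead. I would argue superadditivity within a level: for the internal nodes $x_1,\dots,x_r$ at level $h$, their leaf sets are disjoint. Concatenating their worst-case weighted tuples, with each $v_i$ repeated $w_i$ times, gives a $k'$-tuple ($k'\le k$; pad the rest with copies of one of the points, which only increases cost). Its optimal meeting cost is at least $\sum_{x \text{ at level } h} D_x$, since for any common meeting point $z$ the cost splits across groups and each group pays at least its own $D_x$. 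Thus
\[
L_h:=\sum_{x\text{ at level }h}D_x\le \kmeetplusdiam{k}.
\]
We therefore output $\widehat D=\max_h L_h$. Then $\widehat D\le\kmeetplusdiam{k}\le\sum_h L_h\le \ell\widehat D$, as there are at most $\ell$ levels.

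\textbf{Running time.} I expect the main obstacle to be keeping this within $O(nm+\ell k^{1/\ell}n^{q+1})$ rather than paying per node. To do so, choose $T$ "balanced": at each level all nodes get children whose leaf counts are $\lfloor s/q'\rfloor$ or $\lceil s/q'\rceil$. Then each level has only $O(1)$ distinct (multiset of weights) types, and only one type needs the full $q$ children. Each distinct $D$ value is computed once in $O(q\,n^{q+1})$ time, and $L_h$ is obtained by multiplying by type counts. Over $\ell$ levels this gives $O(\ell q\, n^{q+1})$. The bound $\ell k^{1/\ell}$ versus $\ell q$ differs by at most a constant factor, which is absorbed.
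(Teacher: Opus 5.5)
Your proposal is correct and follows essentially the same route as the paper: the same weighted instances $\Delta_v$ on a tree of height at most $\ell$ and branching factor $\lceil k^{1/\ell}\rceil$, and the same output $\max_j A_j$ of per-level sums, with superadditivity across the disjoint groups of a level giving the lower bound and bottom-up merging giving the upper bound. The only difference is the tree you use for the running time. You use a balanced split, which does work because node sizes at each depth stay within some $\{c,c+1\}$ and $\lceil k/q^{\ell}\rceil=1$; you should state both facts explicitly, and this tree can put leaves at different depths, which is why you need the padding step. The paper instead takes the subtree spanned by the first $k$ leaves of the complete $s$-ary tree, so all leaves lie at depth $\ell$ and each depth has only complete nodes plus one boundary node.
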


\begin{proof}
   For any integer $\ell\geq 1$, let
    \[
        s=\left\lceil k^{1/\ell}\right\rceil.
    \]
    Since $s^\ell\geq k$, there is a rooted tree $T$ of height $\ell$
    with exactly $k$ leaves, all at depth $\ell$, and with at most $s$
    children per node.

    For each node $v$, let $w(v)$ be the number of leaves below $v$.
    If the children of $v$ are $v_1,\ldots,v_t$, define
    \[
        \Delta_v
        :=
        \max_{x_1,\ldots,x_t\in V}
        \min_{z\in V}
        \sum_{i=1}^t w(v_i)d(x_i,z).
    \]
    For $j=1,\ldots,\ell$, let
    \[
        A_j
        :=
        \sum_{\substack{v\in T\\\operatorname{depth}(v)=j-1}}
        \Delta_v,
    \]
    and output
    \[
        \widehat D:=\max_{1\leq j\leq\ell}A_j.
    \]

    First, we show $A_j\leq\kmeetplusdiam{k}$ for every $j$. Indeed, for each
    node $v$ at depth $j-1$, take points attaining $\Delta_v$ and
    replace the point corresponding to a child $v_i$ by $w(v_i)$
    copies. Concatenating these tuples over all nodes at depth $j-1$
    gives a $k$-tuple whose cost to every common center is at least
    \[
        \sum_{\operatorname{depth}(v)=j-1}\Delta_v=A_j.
    \]

    Conversely, place an arbitrary $k$-tuple on the leaves of $T$ and
    merge it upward. At an internal node $v$, after the points below
    each child $v_i$ have met at some $x_i$, move them to their best
    common center. The additional cost is
    \[
        \min_{z\in V}\sum_{i=1}^t w(v_i)d(x_i,z)
        \leq \Delta_v.
    \]
    After all $\ell$ levels, every point has reached the root.
    Therefore,
    \[
        \kmeetplusdiam{k}
        \leq \sum_{v\text{ internal}}\Delta_v
        =\sum_{j=1}^{\ell}A_j
        \leq \ell\widehat D.
    \]
    Together with $\widehat D\leq\kmeetplusdiam{k}$, this proves the
    approximation guarantee.

    Finally, take \(T\) to be the subtree spanned by the first \(k\) leaves of the complete \(s\)-ary tree of height \(\ell\). At each depth, every internal node except possibly one is complete, so there are at most two child-weight vectors at that depth: one for the complete nodes and one for the unique boundary node. Thus, rather than evaluating \(\Delta_v\) separately for every node, we evaluate only \(O(\ell)\) distinct weighted instances and multiply each value by the number of nodes of its type. Each instance involves at most \(s\) centers and one meeting point, and hence can be computed by exhaustive enumeration in \(O(s n^{s+1})\) time. The level sums \(A_1,\ldots,A_\ell\), and therefore \(\widehat D=\max_j A_j\), can then be assembled in \(O(\ell)\) time. Including the \(O(nm)\) time needed to compute all pairwise distances, the total running time is
\[
    O\!\left(nm+\ell s n^{s+1}\right).
\]
\end{proof}

\subsection{Small Hitting Sets for Unit Meet-Diameter}

\begin{theorem}
    Suppose that a graph $G$ has $\meetmaxdiam$ $1$ (for every pair of nodes $u, v$, there is a common out-neighbor $x$). Then there is a set $S$ of size $O(\sqrt{n} \log n)$ such that every node has an out-neighbor in $S$.
\end{theorem}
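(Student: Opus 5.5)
We prove the statement with a two-phase construction of $S$: a random sample to handle vertices with large out-neighborhoods, and a covering argument for the remaining vertices. Throughout, write $N(v)$ for the closed out-neighborhood of $v$. It contains $v$ itself, since $d(v,v)=0$, so $v$ can serve as its own meeting point. The hypothesis $\meetmaxdiam = 1$ says exactly that $N(u)\cap N(v)\neq\emptyset$ for every pair $u,v$.

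\emph{Phase 1 (heavy vertices).} Call $v$ \emph{heavy} if $|N(v)|\geq \sqrt{n}$ and \emph{light} otherwise. Sample $S_1$ uniformly with $|S_1| = \Theta(\sqrt{n}\log n)$. By the same hitting argument used in the proof of \thmref{2approx}, with positive probability (indeed with high probability) $S_1$ meets every $N(v)$ of size at least $\sqrt{n}$. Hence every heavy vertex already has an out-neighbor in $S_1$.

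\emph{Phase 2 (light vertices).} If there are no light vertices we are done, so fix a light vertex $u$ and set $S_2 = N(u)$, which has $|S_2| < \sqrt{n}$. For any other vertex $v$, the hypothesis gives $N(u)\cap N(v)\neq\emptyset$, so $v$ has an out-neighbor in $N(u)=S_2$. Thus a single light vertex already covers everything, and $S = S_1\cup S_2$ has size $O(\sqrt{n}\log n)$.

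This argument is simple enough that Phase 1 is in fact unnecessary whenever a light vertex exists. The only genuine case analysis is the dichotomy between ``some vertex is light'' and ``all vertices are heavy.'' In the first case $N(u)$ itself is the hitting set. In the second, a random sample of size $\Theta(\sqrt{n}\log n)$ hits all the neighborhoods with positive probability: each vertex is missed with probability at most $(1-1/\sqrt n)^{c\sqrt n\log n}\le n^{-c}$, and a union bound over the $n$ vertices finishes the argument.

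The main point needing care is the convention that $v\in N(v)$, since the meet point may be $v$ itself. The statement says ``common out-neighbor'' and ``has an out-neighbor in $S$,'' so these notions must be read consistently. If one instead insists on strict out-neighbors, the same argument goes through verbatim with $N(v)$ replaced by the strict out-neighborhood, provided the hypothesis is likewise read as giving strict common out-neighbors. I would state the convention explicitly at the start and then run the dichotomy above.
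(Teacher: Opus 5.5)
Your proof is correct, and it takes a genuinely different and simpler route than the paper's.

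\emph{The paper's route.} The paper always begins with a random hitting set of size $O((n/L)\log n)$ for the vertices of out-degree at least $L$. It then covers the remaining set $R$ of uncovered low-degree vertices greedily:
\begin{itemize}
\item pick some $r\in R$;
\item add the out-neighbor $x$ of $r$ with the most in-neighbors in $R$, which is at least $|R|/L$ by pigeonhole, since every vertex of $R$ shares an out-neighbor with $r$;
\item delete those in-neighbors from $R$ and repeat.
\end{itemize}
This shrinks $|R|$ by a factor $(1-1/L)$ per step, so $L\ln n$ extra vertices suffice. The total is $(L+n/L)\log n$, and $L=\sqrt{n}$ gives $O(\sqrt{n}\log n)$.

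\emph{Your route.} Your key observation is that the paper's pigeonhole step already contains the whole answer. Because all closed out-neighborhoods pairwise intersect, $N(r)$ by itself covers every vertex. So you can add all of $N(r)$ at once, which is fewer than $L$ vertices. The iteration, and with it the $L\ln n$ term, disappears.

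\emph{What this buys.} Your construction becomes a true dichotomy rather than a sum of two phases: either some vertex is light and its neighborhood works, or every vertex is heavy and the random sample works. Since the cost is now the maximum of the two cases rather than their sum, you can rebalance the threshold. With threshold $L$ the cases cost $<L$ and $O((n/L)\log n)$ respectively, and choosing $L=\sqrt{n\log n}$ gives the slightly stronger bound $O(\sqrt{n\log n})$. The paper's iterative cover mirrors the heavy/light shrinking argument it uses for the diameter algorithm, but it buys nothing extra for this statement.

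Your explicit handling of the closed-neighborhood convention, which is consistent with $d(v,v)=0$ in the paper's definition of meet-distance, is a point the paper leaves implicit.
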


\begin{proof}
    Let $L$ be our degree threshold. We let $S$ initially be a hitting set of size $O((n/L) \log n)$, which covers the out-neighborhood of all vertices with out-degree at least $L$. Let $R$ be the set of nodes with no out-neighbor in $S$. For all $r \in R$, the out-degree of $r$ is at most $L$.

    We then perform the following iterative process. Take an arbitrary $r \in R$, and let $x$ be the out-neighbor of $r$ with the most in-neighbors in $R$. Since $r$ has meet distance $1$ with everything in $R$, the in-degree of $x$ in $R$ must be at least $|R|/\deg_{\mathrm{out}}(r) \geq |R|/L$. We add $x$ to $S$ and remove all of its in-neighbors from $R$. The new size of $R$ is then at most $|R|(1 - 1/L)$.

    After $L \ln|R| \leq L \ln n$ iterations of the above procedure, the size of $R$ is $|R|(1 - 1/L)^{L \ln |R|} < 1$; thus $R$ is the empty set. Therefore we obtain a hitting set $S$ of size $(L + (n/L)) \log n$. Setting the degree threshold to be $L = \sqrt{n}$ gives the desired result.
\end{proof}

\section*{AI Disclosure}
We used ChatGPT (OpenAI) to check our writing and correctness. The tool materially affected \cref{sec:radiuslb} by improving our meet-radius lower bound from a constant to any finite approximation. It also affected the proof of \corref{ell} by extending the $\ell$-approximation to apply to all $k$ (including $k$ with non-integer $k^{1/\ell}$) without any (even low-order) loss in the approximation. The authors verified the correctness and originality of all content including references.

\section*{Acknowledgments}
This research was supported by NSF Grant CCF-2330048, BSF Grant 2024233, and a Simons
Investigator Award. Y. Kirkpatrick was supported in part by an NSF Graduate Research Fellowship
under Grant No 2141064. J. Kuszmaul was supported in part by an MIT Akamai Presidential
Fellowship.

\bibliographystyle{plainurl}
\bibliography{meet}

\end{document}